\documentclass[runningheads,envcountsect]{llncs}

\usepackage{amsmath}
\usepackage{mathrsfs}

\usepackage{multirow}
\usepackage{amsfonts}
\usepackage[colorlinks]{hyperref}
\usepackage{setspace}
\usepackage{float}
\usepackage{textcomp}
\usepackage{enumerate}
\usepackage[ruled,linesnumbered,noend]{algorithm2e}
\usepackage{subcaption}
\usepackage{marvosym}

\usepackage{graphicx}
\DeclareGraphicsExtensions{.pdf,.jpg,.png}
\newcommand{\tabincell}[2]{\begin{tabular}{@{}#1@{}}#2\end{tabular}}



\title{The Hardness of Optimization Problems on the Weighted Massively Parallel Computation Model\thanks{This work was supported by the National Natural Science Foundation of China under grants 61832003, 62273322, 61972110, and National Key Research and Development Program of China under grants 2021YFF1200100 and 2021YFF1200104.}}
\titlerunning{The hardness of optimization problems on the WMPC model}


\author{Hengzhao Ma\inst{\dag}\textsuperscript{\Letter} \and Jianzhong Li\inst{\dag}\textsuperscript{\Letter}}
\authorrunning{H. Ma, J. Li}
\institute{
	\dag\;Shenzhen Institute of Advanced Technology, Chinese Acadamy of Sciences\\
	\Letter\; \{hz.ma,lijzh\}@siat.ac.cn
}

\begin{document}
\maketitle



\begin{abstract}
The topology-aware Massively Parallel Computation (MPC) model is proposed and studied recently, which enhances the classical MPC model by the awareness of network topology. The work of Hu et al. on topology-aware MPC model considers only the tree topology. In this paper a more general case is considered, where the underlying network is a weighted complete graph. We then call this model as Weighted Massively Parallel Computation (WMPC) model, and study the problem of minimizing communication cost under it.  Two communication cost minimization problems are defined based on different pattern of communication, which are the Data Redistribution Problem and Data Allocation Problem. We also define four kinds of objective functions for communication cost, which consider the total cost, bottleneck cost, maximum of send and receive cost, and summation of send and receive cost, respectively. Combining the two problems in different communication pattern with the four kinds of objective cost functions, 8 problems are obtained. The hardness results of the 8 problems make up the content of this paper. With rigorous proof, we prove that some of the 8 problems are in P, some FPT, some NP-complete, and some W[1]-complete.
\keywords{massivelly parallel computation, Weighted MPC model, communication cost optimization}
\end{abstract}






\section{Introduction}
The Massively Parallel Computation model \cite{Karloff2010}, MPC for short, has been a well acknowledged model to study parallel algorithms \cite{Dean2008,Afrati2010,AndoniNOY14,Beame2013,Beame2014,Ghaffari2018,Koutris2011,Tao2013} ever since it was proposed. Compared to other parallel computation models such as PRAM \cite{Karp1989}, BSP \cite{Valiant1990}, LogP \cite{Culler1996} and so on, the advantage of the MPC model lies in its simplicity and the power to capture the essence of computation procedure of modern share-nothing clusters. In the MPC model, computation proceeds in synchronous rounds, where  in each round the computation machines first communicate with each other, then conduct local computation. Any pair of machines can communicate in a point-to-point manner, and all the communication messages can be transferred without congestion. 

Although the MPC model is simple and powerful, one of its most important shortcomings  is revealed by some recent works \cite{Blanas2020,Hu2021}, which is the strong assumption of \textit{homogeneity}. All the machines in MPC model are considered as identical, and the communication bandwidth between any pair of machines are identical too \cite{Karloff2010}. In realistic parallel environment, the assumption of identical computation machines can be satisfied in most cases, but the assumption of identical communication bandwidth can not. Typically, a cluster consists of several racks connected by slower communication channels, and each rack includes several machines connected by faster communication channels. Thus, the communication bandwidth of in-rack and across-rack communication differ significantly, which refutes the basic assumption of homogeneous communication network in MPC model.

In order to tackle this shortcoming of the MPC model, a new \textit{topology aware massively parallel computation} model was proposed and studied in \cite{Blanas2020,Hu2021}. The computation machines are still identical in this model\footnote{There may be non-computational machines in this model, though.}, but the communication bandwidth between different pair of machines are different. This model was first proposed in recent works \cite{Blanas2020}, where the underlying communication network is represented as a graph, and the edges are assigned with a weight which represents the communication bandwidth. However, the  paper \cite{Blanas2020} only declared the new model but did not give any theoretical results. The other work \cite{Hu2021} considered three data processing tasks on this model, which are set intersection, Cartesian product and sorting. Algorithms and lower bounds about the communication cost optimization problems for the three tasks were proposed  \cite{Hu2021}. However, the authors of \cite{Hu2021} restricted the underlying communication network to trees, and the algorithm and lower bounds given in that paper can not be generalized to graphs other than trees.

In this paper, we follow the line of research started by \cite{Blanas2020,Hu2021}, and consider the topology aware massively parallel computation model in a more general case, where the underlying communication network is a complete weighted graph. In this sense, our work is a complement to the work in \cite{Hu2021}. The goal of this paper is also to minimize the communication cost. However, unlike the work in \cite{Hu2021} which considers specific computation tasks, in this paper we define general communication cost minimization problems that capture the characteristics of a variety of computation tasks.

\subsection{Description of the research problems in this paper}
\subsubsection{The WMPC Model}
We first give a more detailed description of the computational model considered in this paper, which is called Weighted Massively Parallel Computation (WMPC) model. 

In WMPC model, there are $n$ computation machines with identical computational power. The communication network is modeled as a weighted complete graph, which is represented by a $n\times n$ matrix $C$. $C$ is called the communication cost matrix from now on, and it is considered as a known parameter of the WMPC model.  $C[i,j]$ is the communication cost between computation machine $i$ and $j$ for $1\le i,j\le n$, where larger value implies larger communication cost or communication latency. $C[i,i]$ is set to 0 for $1\le i\le n$. It is assumed that all pairs of machines can communicate in a point-to-point way which is in accordance with the original MPC model, and thus $C[i,j]<\infty$ holds for $1\le i,j\le n$. The matrix $C$ is not necessary to be symmetric, i.e., $C[i,j]$ may not be equal to $C[j,i]$. 

The computation on WMPC proceeds in synchronous rounds which behaves the same with the original MPC model. In each round, the computation machines first communicate with each other, then conduct local computation. 

The initial data distribution plays an important part in the problems studied in this paper. A lot of former research works on MPC model assume that the data are uniformly split across the machines \cite{Ghaffari2018,Beame2013}. In this paper, it is assumed that the data can be arbitrarily distributed, and the amount of data placed at each machine is known in advance. This is also the same assumption adopted in \cite{Beame2014,Hu2021}.

\subsubsection{Objective functions}\label{subsubsec:cost-functions}
The goal of this paper is to minimize the communication cost under WMPC model, which is divided into \textit{send cost} and \textit{receive cost}. If  $\alpha$ amount of data is transferred from machine $i$ to machine $j$, it incurs $\alpha\cdot C[i,j]$ send cost to machine $i$, and $\alpha\cdot C[i,j]$ receive cost to machine $j$. Denote $send_i$ and $rcv_i$ to be the send and receive cost of machine $i$ for $1\le i\le n$, then we define the following four objective functions.\\
Total cost (TOTAL): $\sum_{i=1}^n send_i$.\\
Bottleneck cost (BTNK): $\max_{i=1}^n rcv_i$.\\
\underline{M}aximum of \underline{s}end and   \underline{r}eceive cost (MSR): $\max_{i=1}^n \{send_i, rcv_i\}$.\\
\underline{S}um of \underline{s}end and \underline{r}eceive cost (SSR): $\max_{i=1}^n \{send_i+ rcv_i\}$.

For the TOTAL cost function, it holds that $\sum_{i=1}^n send_i=\sum_{i=1}^n rcv_i$. For the BTNK cost function, we choose to use the bottleneck of the receive cost rather than send cost, since the receive cost also reflects the workload of local computation. The MSR cost function is used in \cite{rodiger2014locality} on the classical MPC model, and in this paper we investigate its properties on WMPC model. The SSR cost function is closely related to MSR cost function, and is basically a 2-approximation of MSR cost.

Note that the send and receive cost is defined based on the amount of data transferred between two machines. For different commutation task and different communication pattern, the way of calculating the amount of transferred data will be different. Next we will use concrete computation tasks such as parallel sorting and join as the introducing example, analyze their communication patterns, and define the problems to be studied in this paper. We will introduce two problems, named Data Redistribution Problem and Data Allocation Problem. 

\subsubsection{The Data Redistribution Problem}
Consider the following parallel sorting algorithm on classical MPC model, which is often referred as TeraSort \cite{o2008terabyte}. The algorithm first selects $n-1$ splitters $s_1\le s_2\le \cdots\le s_{n-1}$ and broadcast the splitters to all machines. The $n-1$ splitters form $n$ intervals $I_i=(s_{i-1},s_i]$ where $s_0=-\infty$ and $s_n=\infty$. After obtaining the splitters, each machine sends the local data falling in the $i$-th interval to the $i$-th machine. In such way the data is ordered across the machines. Note that the label of the machines are fixed before the algorithm starts. Then the machines conduct local sorting, and the sorting task can be finished.

Now consider running the parallel sorting algorithm on the WMPC model, and assume that the splitters have been determined. The algorithm described above asks the data in the $i$-th interval to be sent to the $i$-th machine. However, this operation may lead to non-optimal communication cost. Consider the following extreme case. The data are initially inversely sorted across the machines, i.e., for machine $i<j$, the data in machine $i$ are always no less than the data in machine $j$. In such a case, if the $i$-th interval is assigned to the $(n-i)$-th machine, there would be no need to conduct communication. However, if the algorithm asks to send the data in the $i$-th interval to the $i$-th machine, all the data will be totally redistributed, incurring large amount of communication. 

Actually, there exist two shortcomings for the above TeraSort algorithm on classical MPC model. First, it neglects the initial data distribution, and neglects the importance of the way to assign the intervals to the machines to minimize the communication cost. Second, it does not consider the difference of communication costs between different pair of machines. By tackling these two points together, the first research problem to be studied in this paper is formed, which is called the Data Redistribution Problem (DRP).

The input of DRP is two $n\times n$ matrices $T$ and $C$. $T[i,j]$ represents the amount of data in the $i$-th machine that fall in the $j$-th interval. The $C$ matrix is the communication cost matrix of the WMPC model. The output is to assign the intervals to the machines, such that the communication cost is minimized. By applying the four communication cost functions introduced in Section \ref{subsubsec:cost-functions}, we get four problems denoted as DRP-TOTAL, DRP-BTNK, DRP-MSR and DRP-SSR, respectively. The four problems are studied in Section \ref{sec:drp}.

\subsubsection{The Data Allocation Problem}

In the above case of parallel sorting, it is assumed that the splitters are known in advance. However, how to select the splitters to minimize the communication cost is also an important research problem \cite{Tao2013}, and even a new problem under the WMPC model. For a formal description, let $N$ be the total number of data records to be sorted, and $n$ be the number of machines. Under the assumption that the initial data distribution is known in advance, let $S_i=\{s_{i,1},s_{i,2},\cdots,s_{i,l_i}\}$, $1\le i\le n$, which is the data initially residing in machine $i$. $l_i$ is the number of data records in machine $i$, and $\sum_{i=1}^n l_i=N$. If the splitters are chosen as 
$s_1,s_2,\cdots, s_{n-1}$, they will form $n$ intervals $(s_{j-1},s_j]$,  where $s_0=-\infty$ and $s_n=\infty$. Let $T[i,j]=|S_i\cap (s_{j-1},s_j]|$, which is the number of data records in machine $i$ that falls into the $j$-th interval $(s_{j-1},s_j]$. To minimize the communication cost, the problem is to select $n-1$ splitters $s_1\le s_2\le \cdots\le s_{n-1}$ which split the data into $n$ intervals, then find an assignment from the intervals to the machines, such that the communication cost is minimized. This problem is called Data Allocation Problem (DAP).

\textit{Remark.} Although DRP and DAP are introduced based on sorting, they can be defined using the idea of virtual machines and physical machines. 
For DRP, the input $T[i,j]$ can be considered as the amount of data initially residing in physical machine $i$ to be processed by virtual machine $j$, and the output is a permutation which assigns virtual machines to physical machines so that the communication cost is minimized.	
For DAP, choosing the splitters can be regarded as deciding the data distribution across the virtual machines, since each virtual machine is responsible to collect the data in one interval. In such a point of view, DRP and DAP can be applied to a wide range of concrete problems. Also, DRP and DAP reflect only the problems that can be solved using one synchronous round. It will the future work to study multi-round algorithms on WMPC model.
 
\subsection{Summary of results and paper organization}
Summarizing the above descriptions, we have two kinds of problems including DRP and DAP. We also have four kinds of communication cost functions including TOTAL, BTNK, MSR and SSR. 8 problems are obtained by combining two kinds of problems with four kinds of cost functions. The hardness for the 8 problems make up the content of this paper. Table \ref{table:results} summarizes all the proposed results. Some less important results and some detailed proofs are moved to appendix due to space limitation. 

\begin{table*}[htb]
	\centering
	\caption{Summary of results}\label{table:results}
	\label{tbl:cmp-crnn}
		\begin{tabular}{|c|c|c|c|c|}
			\hline
			& TOTAL & BTNK & MSR &SSR \\
			\hline
			DRP      & \tabincell{c}{P\\ (Section \ref{subsec:drp-total}) }     & \tabincell{c}{P\\ (Section \ref{subsec:drp-btnk}) }   & \tabincell{c}{NP-complete \\ (Section \ref{subsec:drp-msr}) }& \tabincell{c}{NP-complete  \\(Section \ref{subsec:drp-ssr}) }   \\
			\hline
			DAP-Cont &  \tabincell{c}{FPT\\ (Section \ref{subsec:dap-total-btnk-fpt}) }   & \tabincell{c}{FPT\\ (Section \ref{subsec:dap-total-btnk-fpt}) } & \tabincell{c}{W[1]-complete \\(Section \ref{subsubsec:dap-msr-ssr-hardness})} & \tabincell{c}{W[1]-complete \\ (Section \ref{subsubsec:dap-msr-ssr-hardness})} \\
			\hline
		
		\end{tabular}
\end{table*}

In the rest of this paper, we first introduce some denotations that will be used throughout this paper in Section \ref{subsec:denotaion}, then present the results in Section \ref{sec:drp} and \ref{sec:dap-cont} following the order given in Table \ref{table:results}. The related work and future work are delayed to Section \ref{sec:rwork} and \ref{sec:fwork}. Section \ref{sec:conc} concludes this paper.

\subsection{Denotations}\label{subsec:denotaion}
A $m\times n$ matrix $A$ is denoted as $A^{m\times n}$. The element in $A$ at row $i$ and column $j$ is denoted as $A[i,j]$. 
The set of consecutive integers $\{i,i+1,i+2,\cdots, j\}$ is denoted as $[i,j]$. 
The set of integers $\{1,2,\cdots, n\}$ is denoted as $[n]$. 

A permutation on $[n]$ is a one-to-one mapping from $[n]$ to $[n]$, and it is usually denoted as $\pi$. The set of all permutations on $[n]$ is denoted as $\Pi(n)$. Denote $\pi_i$ as the image of $i$ under $\pi$. If $\pi_i=j$, it is also said that $i$ is assigned to $j$ by permutation $\pi$.  We also use $\pi^{-1}$ to denote the inverse permutation of $\pi$, i.e., if $\pi_i=j$ then $\pi^{-1}_j=i$.
\section{The Data Redistribution Problem Series}\label{sec:drp}
\begin{definition}[DRP]\label{def:drp}
	Input: A $n\times n$ transmission matrix $T^{n\times n}$ and a $n\times n$ communication cost matrix $C^{n\times n}$, where $C[i,i]=0$ for $i\in [n]$.\\	
	Output: find a permutation $\pi\in \Pi(n)$ such that the communication cost function chosen from TOTAL, BTNK, MSR and SSR is minimized. Formally,\\
	DRP-TOTAL:
	$$\min_{\pi\in \Pi(n)} \sum_{i=1}^n\sum_{j=1}^n T[i,j]C[i,\pi_j] $$
	DRP-BTNK:
	$$\min_{\pi\in \Pi(n)}\max_{i\in [n]} {\sum\limits_{j=1}^nT[j,\pi^{-1}_i]C[j,i] }$$
	DRP-MSR:
	$$\min_{\pi\in \Pi(n)}\max_{i\in [n]} \left\{\sum\limits_{j=1}^nT[i,j]C[i,\pi_j], \sum\limits_{j=1}^nT[j,\pi^{-1}_i]C[j,i] \right\}$$
	DRP-SSR:
	$$\min_{\pi\in \Pi(n)}\max_{i\in [n]} \left\{\sum\limits_{j=1}^nT[i,j]C[i,\pi_j]+\sum\limits_{j=1}^nT[j,\pi^{-1}_i]C[j,i] \right\}$$
\end{definition}

\subsection{DRP-TOTAL}\label{subsec:drp-total}

\begin{theorem}\label{thrm:drp-total-p}
	DRP-TOTAL is equivalent to the Linear Assignment Problem (LAP) \cite{akgul1992linear}. 
\end{theorem}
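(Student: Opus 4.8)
The plan is to show that DRP-TOTAL is exactly an instance of the Linear Assignment Problem, and conversely that every LAP instance arises as a DRP-TOTAL instance, so that the two problems are polynomial-time inter-reducible (indeed, essentially the same problem). Recall that LAP asks, given an $n\times n$ cost matrix $D$, for a permutation $\pi\in\Pi(n)$ minimizing $\sum_{j=1}^n D[j,\pi_j]$.

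First I would rewrite the DRP-TOTAL objective by swapping the order of summation. We have
$$
\sum_{i=1}^n\sum_{j=1}^n T[i,j]\,C[i,\pi_j]
=\sum_{j=1}^n\left(\sum_{i=1}^n T[i,j]\,C[i,\pi_j]\right)
=\sum_{j=1}^n D[j,\pi_j],
$$
where I define $D[j,k]:=\sum_{i=1}^n T[i,j]\,C[i,k]$ for all $j,k\in[n]$. The point is that $D[j,k]$ depends only on the pair $(j,k)$ (the interval $j$ and the machine $k$ it is assigned to), not on the rest of the permutation, so $D$ is a well-defined $n\times n$ matrix computable from $T$ and $C$ in $O(n^3)$ time. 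Thus minimizing the DRP-TOTAL cost over $\pi\in\Pi(n)$ is literally the LAP on cost matrix $D$, which gives one direction of the equivalence.

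For the converse direction, given an arbitrary LAP cost matrix $D^{n\times n}$, I would exhibit $T$ and $C$ with $C[i,i]=0$ realizing it. The cleanest choice is to take $T$ to be the identity matrix, $T[i,j]=[i=j]$; then $D[j,k]=\sum_i T[i,j]C[i,k]=C[j,k]$, so setting $C:=D$ would work except that we need $C[i,i]=0$. To handle the diagonal constraint, note that adding a constant to an entire row of the LAP cost matrix shifts every feasible objective value by the same constant, hence does not change the set of optimal permutations; so I would replace $D$ by $D'$ with $D'[j,k]=D[j,k]-D[j,j]$, which has zero diagonal, and set $C:=D'$, $T:=I$. (If one insists on nonnegative communication costs, one can instead add a large constant to make all off-diagonal entries positive and argue the optimum is preserved, or simply observe that LAP is standardly stated without a sign restriction.) This shows any LAP instance is a DRP-TOTAL instance, completing the equivalence.

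The only mildly delicate point — and the one I would be most careful about — is the diagonal normalization $C[i,i]=0$ in the reduction from LAP to DRP-TOTAL: one must verify that the row-shift preserves optimal solutions and not merely optimal values, and decide whether the paper's convention allows negative cost entries. Everything else is the routine summation swap and the observation that $D$ is polynomially computable; I expect the author's proof to be short and to consist essentially of the displayed identity above together with the remark about the identity matrix $T$.
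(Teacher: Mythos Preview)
Your argument is correct and, for the forward direction (DRP-TOTAL reduces to LAP), it is exactly the paper's proof: interchange the order of summation and define $F[j,k]=\sum_{i=1}^n T[i,j]C[i,k]$ (the paper writes this via the 0-1 ILP variables $x_{jk}$ rather than directly in permutation language, but that is cosmetic).

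Where you differ is that you also supply the converse reduction, taking $T=I$ and $C$ a diagonal-shifted copy of the given LAP matrix; the paper simply omits this direction despite the word ``equivalent'' in the statement. Your converse is sound (the row-shift argument preserves the set of optimal permutations, not just the optimal value, since every feasible permutation picks exactly one entry from each row), and it makes the claimed equivalence genuinely two-sided. The paper's purpose here is really only to place DRP-TOTAL in P via the Hungarian algorithm, for which the forward reduction suffices, so its omission is harmless in context; your additional paragraph is a legitimate strengthening rather than a deviation in method.
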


\begin{proof}
	The 0-1 integral linear programming (LP) formation of DRP-TOTAL is
	\begin{equation}\label{eqtn:drp-total-lp}
		\begin{aligned}
			\min & \sum\limits_{k=1}^n\sum\limits_{i=1}^n\sum\limits_{j=1}^nT[i,j]C[i,k]x_{jk} \\
			s.t. & \sum\limits_{i=1}^n x_{ij}=1\; for\; j\in [n],\; and \;\sum\limits_{j=1}^n x_{ij}=1\; for\;  i\in [n]\\
		\end{aligned}
	\end{equation}
	We have
	\begin{equation*}\begin{aligned}
		&\sum_{k=1}^n\sum\limits_{i=1}^n\sum\limits_{j=1}^nT[i,j]C[i,k]x_{jk}=\sum\limits_{j=1}^n\sum\limits_{k=1}^n\sum\limits_{i=1}^nT[i,j]C[i,k]x_{jk}\\
		=&\sum\limits_{j=1}^n \sum\limits_{k=1}^n \left(\sum\limits_{i=1}^nT[i,j]C[i,k]\right)x_{jk}
	\end{aligned}
	\end{equation*}
	Now define another matrix $F^{n\times n}$ as $F[j,k]=\sum_{i=1}^nT[i,j]C[i,k]$, $j,k\in[n]$, and the objective function in Equation \ref{eqtn:drp-total-lp} is transformed into $\min \sum\limits_{i=1}^n\sum\limits_{j=1}^n {F[i,j]x_{ij}}$,
	which is equivalent to the linear programming formation of Linear Assignment Problem.
\qed\end{proof}

\begin{corollary}
	DRP-TOTAL can be solved in $O(n^3)$ time.
\end{corollary}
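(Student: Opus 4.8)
The plan is to turn the equivalence established in Theorem~\ref{thrm:drp-total-p} into an algorithm by tracking the cost of each reduction step. First I would recall that, by the proof of Theorem~\ref{thrm:drp-total-p}, solving DRP-TOTAL on input $(T,C)$ amounts to (i) forming the $n\times n$ matrix $F$ with $F[j,k]=\sum_{i=1}^n T[i,j]C[i,k]$, and (ii) solving the Linear Assignment Problem with cost matrix $F$, since the optimal permutation for the LAP instance $F$ is exactly an optimal permutation $\pi$ for DRP-TOTAL, and the two objective values coincide.

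Next I would bound the running time of each step. Step (i) is nothing but the matrix product $F = T^{\mathsf T} C$ (the transpose of $T$ times $C$), which is computable in $O(n^3)$ arithmetic operations by the schoolbook algorithm; one does not even need fast matrix multiplication here, since the next step already costs $O(n^3)$. For step (ii), I would invoke the classical Hungarian (Kuhn--Munkres) algorithm for LAP, or any of its refinements cited in \cite{akgul1992linear}, which solves an $n\times n$ assignment problem in $O(n^3)$ time. Adding the two contributions gives an $O(n^3)+O(n^3)=O(n^3)$ bound overall, and recovering the permutation $\pi$ (rather than just the optimal value) is immediate from the assignment returned in step (ii).

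There is essentially no deep obstacle here: the only point that needs care is to confirm that the transformation inside the proof of Theorem~\ref{thrm:drp-total-p} is genuinely \emph{constructive and polynomial-time}, i.e., that it produces the explicit matrix $F$ from $(T,C)$ and does not merely assert abstract equivalence. Since $F$ is given by a closed-form summation over the entries of $T$ and $C$, this is clear, and the dominant cost is simply the $O(n^3)$ needed either to build $F$ or to run the assignment solver. Hence the corollary follows.
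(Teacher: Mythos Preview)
Your proposal is correct and matches the paper's own proof essentially verbatim: the paper likewise argues that building the matrix $F$ costs $\Theta(n^3)$ and that the Hungarian algorithm solves the resulting LAP in $O(n^3)$. Your extra observation that $F=T^{\mathsf T}C$ is a matrix product is a nice touch but not needed for the bound.
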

\begin{proof}
	Transforming DRP-TOTAL to LAP needs $\Theta(n^3)$ time. The Hungarian algorithm \cite{martello1987linear} to solve LAP needs $O(n^3)$ time. Then the corollary follows.
\qed\end{proof}

\subsection{DRP-BTNK}\label{subsec:drp-btnk}
\begin{theorem}\label{thrm:drp-btnk-p}
	DRP-BTNK is equivalent to the Linear Bottleneck Asssignment Problem (LBAP) \cite{burkard1999linear}.
\end{theorem}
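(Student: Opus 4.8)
The plan is to mirror the reduction used for DRP-TOTAL in Theorem~\ref{thrm:drp-total-p}, but now aiming at the bottleneck version of the assignment problem instead of the sum version. Recall that in DRP-BTNK the objective is
$$\min_{\pi\in \Pi(n)}\max_{i\in [n]} \sum_{j=1}^n T[j,\pi^{-1}_i]C[j,i],$$
so the cost attributed to (physical) machine $i$ depends only on which virtual machine $\pi^{-1}_i$ is assigned to it, namely on the pair $(\pi^{-1}_i, i)$. First I would define, exactly as in the proof of Theorem~\ref{thrm:drp-total-p}, the matrix $F^{n\times n}$ by $F[j,i]=\sum_{k=1}^n T[k,j]C[k,i]$ for $j,i\in[n]$; then the receive cost of machine $i$ under permutation $\pi$ is precisely $F[\pi^{-1}_i, i]$, i.e.\ $F[j,i]$ where $j$ is the virtual machine assigned to physical machine $i$. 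Hence the DRP-BTNK objective becomes $\min_{\pi}\max_{i\in[n]} F[\pi^{-1}_i,i]$, which is literally the definition of LBAP on the cost matrix $F$.

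The key steps, in order, are: (1) write the $0$--$1$ integer program for DRP-BTNK, with the same doubly-stochastic-type assignment constraints $\sum_i x_{ij}=1$, $\sum_j x_{ij}=1$, $x_{ij}\in\{0,1\}$, and with objective $\min t$ subject to $t \ge \sum_{k=1}^n\sum_{j=1}^n T[k,j]C[k,i]\,x_{ji}$ for each $i\in[n]$; (2) collapse the inner double sum, as in Theorem~\ref{thrm:drp-total-p}, using $\sum_{k}\sum_{j} T[k,j]C[k,i]x_{ji} = \sum_j \big(\sum_k T[k,j]C[k,i]\big) x_{ji} = \sum_j F[j,i]x_{ji}$; (3) observe the resulting program $\min\max_i \sum_j F[j,i]x_{ji}$ over assignment matrices is exactly the LP/IP formulation of LBAP with cost matrix $F$; (4) note the reduction is bidirectional — an arbitrary LBAP instance with cost matrix $F$ is realized as a DRP-BTNK instance, e.g.\ by taking $T$ to be the identity and $C=F$ (so that $F[j,i]=\sum_k T[k,j]C[k,i]=C[j,i]=F[j,i]$), subject to the diagonal constraint $C[i,i]=0$, which can be handled exactly as it is elsewhere in the paper (and if necessary one can add a uniform shift or pad to a slightly larger instance to force zero diagonal without changing the optimum). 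This establishes the claimed equivalence in both directions.

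I expect the only real subtlety to be bookkeeping around the transpose/inverse: DRP-BTNK is phrased in terms of $\pi^{-1}$ and the receive cost, whereas DRP-TOTAL was phrased via $\pi$ and send cost, so one must be careful that the matrix being fed to LBAP is $F$ (with the row index ranging over virtual/interval machines and the column index over physical machines) rather than its transpose. A second minor point is ensuring the LBAP instance produced in the backward direction respects the $C[i,i]=0$ requirement built into Definition~\ref{def:drp}; since LBAP is insensitive to adding a constant to all entries and to the values on the diagonal when the optimal assignment can be forced off-diagonal, this is a routine patch rather than a genuine obstacle. Once the matrix $F$ is in hand, the equivalence is immediate because both problems reduce to "minimize the maximum entry selected by a permutation" on the same cost matrix, and the construction of $F$ costs $\Theta(n^3)$, just as in the DRP-TOTAL case, so a complexity corollary (solvable in polynomial time via known LBAP algorithms, e.g.\ threshold plus bipartite matching) would follow in the same way.
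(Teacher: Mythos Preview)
Your proposal is correct and follows essentially the same approach as the paper: define $F[j,i]=\sum_{k} T[k,j]C[k,i]$, rewrite the bottleneck objective as $\min_\pi\max_i F[\pi^{-1}_i,i]$ (equivalently the 0--1 IP $\min\max_i\sum_j F[j,i]x_{ji}$ over assignment matrices), and identify this with LBAP. The paper's proof is actually terser---it only exhibits the forward reduction via the IP and the $F$ matrix---so your step (4) on the reverse direction and your remarks on the $\pi$/$\pi^{-1}$ bookkeeping and the $C[i,i]=0$ constraint go a bit beyond what the paper writes, but none of that departs from the paper's method.
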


\begin{proof}
	The 0-1 integral LP formation of DRP-BTNK is
	\begin{equation}\label{eqtn:drp-btnk-lp}
		\begin{aligned}
			\min & \max\limits_{i\in [n]}\sum\limits_{k=1}^n\sum\limits_{j=1}^n T[j,k]C[j,i]x_{ki}\\
			s.t. & \sum\limits_{i=1}^n x_{ij}=1\; for\; j\in [n],\; and \;\sum\limits_{j=1}^n x_{ij}=1\; for\;  i\in [n]\\
		\end{aligned}
	\end{equation}
	Define another matrix $F^{n\times n}$ as $F[k,i]=\sum\limits_{j=1}^n T[j,k]C[j,i]$, and the objective function in Equation \ref{eqtn:drp-btnk-lp} becomes $\min  \max\limits_{i\in [n]}\sum\limits_{k=1}^nF[k,i]x_{ki}$,
	which is equivalent to the LP formation of LBAP.
\qed\end{proof}
\begin{corollary}
	DRP-BTNK can be solved in $O(n^3)$ time.
\end{corollary}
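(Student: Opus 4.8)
The plan is to follow exactly the template of the preceding corollary for DRP-TOTAL. By Theorem~\ref{thrm:drp-btnk-p}, an instance $(T,C)$ of DRP-BTNK is solved by first building the cost matrix $F$ with $F[k,i]=\sum_{j=1}^n T[j,k]C[j,i]$ and then solving LBAP on $F$. Forming $F$ costs $\Theta(n^3)$: there are $n^2$ entries, each an inner product of two length-$n$ vectors; in fact $F$ is precisely the matrix product $T^{\top}C$, so textbook multiplication already gives the $\Theta(n^3)$ bound. It therefore remains to argue that LBAP on an $n\times n$ matrix is solvable within $O(n^3)$ time.

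For the LBAP step I would invoke the standard threshold (parametric-search) algorithm. Collect the at most $n^2$ distinct values appearing in $F$ and sort them in $O(n^2\log n)$ time. For a candidate threshold $t$, form the bipartite graph $G_t$ on vertex sets $[n]\cup[n]$ keeping only the edges $(k,i)$ with $F[k,i]\le t$; the optimal bottleneck value is the smallest $t$ for which $G_t$ admits a perfect matching, and this $t$ can be located by binary search over the sorted list using $O(\log n)$ feasibility tests. Each feasibility test is a single bipartite maximum-matching computation, which by the Hopcroft--Karp algorithm runs in $O(\sqrt{n}\cdot n^2)=O(n^{2.5})$ time on a graph with $O(n)$ vertices and $O(n^2)$ edges. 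Hence the LBAP phase costs $O(n^{2.5}\log n)$, the overall running time is dominated by the $\Theta(n^3)$ construction of $F$, and the bound $O(n^3)$ follows; dedicated LBAP algorithms in the literature~\cite{burkard1999linear} do no worse.

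The step I expect to require the most care is not any single calculation but keeping the accounting airtight: one must confirm that the transformation cost $\Theta(n^3)$, rather than the assignment solver, is the bottleneck, i.e.\ that $n^{2.5}\log n=O(n^3)$ (true since $\log n=o(\sqrt{n})$), and one must note that the binary search over the distinct values of $F$ is legitimate because the bottleneck objective is monotone in $t$ (if $G_t$ has a perfect matching then so does $G_{t'}$ for every $t'\ge t$). Both points are routine, so no genuine obstacle arises; the corollary is essentially a bookkeeping consequence of Theorem~\ref{thrm:drp-btnk-p} together with the known polynomial-time complexity of LBAP.
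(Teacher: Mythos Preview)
Your proposal is correct and follows essentially the same approach as the paper: use Theorem~\ref{thrm:drp-btnk-p} to reduce to LBAP, note that building $F$ takes $\Theta(n^3)$, and then bound the LBAP solver by $O(n^3)$. The only difference is cosmetic: the paper simply cites an $O(n^3)$ LBAP algorithm from the literature, whereas you spell out a concrete threshold/binary-search plus Hopcroft--Karp argument yielding $O(n^{2.5}\log n)$, which is a perfectly valid (indeed sharper) justification.
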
	
\begin{proof}
	Transforming DRP-BTNK to LBAP needs $\Theta(n^3)$ time. The algorithm to solve LBAP needs $O(n^3)$ time \cite{pundir2015new}. Then the corollary follows.
\qed\end{proof}

\subsection{DRP-MSR}\label{subsec:drp-msr}

\begin{theorem}\label{thrm:drp-msr-npc}
	DRP-MSR is NP-complete.
\end{theorem}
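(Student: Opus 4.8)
The plan is to show DRP-MSR is NP-complete by reduction from a known NP-complete problem, after first checking membership in NP. Membership is immediate: given a candidate permutation $\pi$, we can compute $send_i$ and $rcv_i$ for every $i\in[n]$ in $O(n^2)$ time and check whether $\max_{i}\{send_i,rcv_i\}$ is at most the target value, so DRP-MSR (in its decision version) lies in NP.

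For the hardness, I would reduce from a scheduling-flavoured or partition-flavoured problem, since the MSR objective is a bottleneck over per-machine loads and is therefore naturally a ``makespan'' quantity. The cleanest candidate is a restricted form of \textsc{3-Partition} or \textsc{Numerical Matching with Target Sums}, or alternatively the classical \textsc{Minimum Makespan Scheduling on identical machines} / \textsc{Multiprocessor Scheduling}; another natural source is \textsc{Bin Packing}. The idea is to build an instance of DRP-MSR in which the communication-cost matrix $C$ and the transmission matrix $T$ are engineered so that choosing the permutation $\pi$ corresponds exactly to assigning ``items'' (rows of $T$, or intervals) to ``machines'', and the per-machine receive cost $\sum_j T[j,\pi^{-1}_i]C[j,i]$ equals the total size of items packed into bin/machine $i$. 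To make the send cost inert (so that the MSR objective is governed purely by the receive cost, which is what carries the combinatorial difficulty), I would try to arrange $C$ so that from the perspective of any sending machine the outgoing costs are uniform — e.g., make $C$ have a structure where row sums of $T$ times the relevant $C$ entries are balanced, or pad with dummy machines/intervals of zero data — so that $\max_i send_i$ is always below the threshold regardless of $\pi$. Then a permutation achieving MSR value at most $B$ exists iff the underlying packing/scheduling instance has a solution with makespan at most $B$.

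Concretely, the key steps, in order, would be: (1) fix the source problem and its instance; (2) define $n$ as (number of items) $+$ (number of bins) or similar, partitioning machine indices into ``item-slots'' and ``bin-slots''; (3) specify $T$ so that row $i$ (an item) has its total mass concentrated appropriately, and specify $C[i,k]$ so that $T[i,\cdot]C[i,k]$ contributes the item's size to bin $k$ when the item is routed there, and contributes $0$ (or a harmless small amount) otherwise; (4) verify the forward direction — a feasible packing yields a permutation with MSR $\le B$; (5) verify the backward direction — any permutation with MSR $\le B$ forces items to land in bins respecting the capacity $B$, hence yields a feasible packing; (6) observe the construction is polynomial-time. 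Care is needed because $\pi$ is a bijection, not an arbitrary assignment, so the reduction must ensure that the ``item-slots'' on the image side can only be occupied in a way that does not interfere — typically by giving item-slots zero or dominating cost so that mapping two items to each other is either free or clearly suboptimal; this bijection constraint, together with keeping the send cost below threshold simultaneously with the receive cost, is the main obstacle I expect, and it is the part of the construction that will require the most delicate choice of the entries of $T$ and $C$.
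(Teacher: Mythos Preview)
Your overall instinct is right—a partition/makespan-style reduction is exactly what the paper does (it reduces from the ordinary \textsc{Partition} problem)—but your concrete plan has a structural gap. You propose to make the \emph{receive} cost carry the combinatorics and keep the send cost inert. But if the send side is always below the threshold regardless of $\pi$, then the decision problem you are left with is precisely DRP-BTNK, i.e.\ ``is $\max_i rcv_i\le B$?'', and the paper has just shown this is equivalent to the Linear Bottleneck Assignment Problem and hence in~P. The underlying reason is that $\pi$ is a bijection, so machine $i$ receives exactly one column $\pi^{-1}_i$, and
\[
rcv_i=\sum_{j} T[j,\pi^{-1}_i]\,C[j,i]
\]
is a fixed function of that single column together with the fixed vector $C[\cdot,i]$. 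There is no mechanism by which several ``items'' can be routed to the same bin and their sizes summed on the receive side; the only freedom $\pi$ gives you is \emph{which} single column lands at $i$. So a receive-side construction cannot encode bin packing or multiprocessor scheduling the way you outline—any such reduction would also be a reduction to DRP-BTNK and hence cannot establish NP-hardness.

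The paper's trick is to flip this and let the \emph{send} cost of two hub machines do the aggregation. Both hubs carry all the items, $T[1,j]=T[2,j]=s_j$ for $j\in[n]$; the remaining $2n$ machines are split into two blocks, with hub~1 having unit cost into block~1 and zero into block~2, hub~2 the reverse, and a prohibitive cost $\Delta$ between the two hubs to forbid assigning item-columns there. Then
\[
send_1=\sum_{j:\ \pi_j\in\text{block }1} s_j,\qquad send_2=\sum_{j:\ \pi_j\in\text{block }2} s_j,
\]
so the MSR value equals $\max\bigl\{\sum_{S_1}s_j,\ \sum_{S_2}s_j\bigr\}$, which is $B/2$ iff a perfect partition exists. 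All receive costs are automatically small (each non-hub machine receives a single $s_j$), so the send side is the binding constraint and the MSR objective genuinely differs from the BTNK one. If you rework your construction to aggregate on the send side in this fashion, your plan goes through; as written, the receive-side version collapses to a polynomially solvable problem.
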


\begin{proof}
	We reduce the following PARTITION problem to DRP-MSR, which is well known to be NP-complete.\\
	Input: a set $S$ of $n$ integers $S=\{s_1,s_2,\cdots, s_n \}$, where $\sum\limits_{s_i\in S}s_i=B$.\\
	Output: decide whether there exists a partition $(S_1,S_2)$ of $S$ s.t. $S_1\cap S_2=\emptyset, S_1\cup S_2=S$, and $\sum\limits_{s_i\in S_1}s_i=\sum\limits_{s_i\in S_2}s_i=B/2$. Such partition is called \textit{perfect}.
	
	For any instance of PARTITION, construct an instance of DRP-MSR. The two matrices $T$ and $C$ as the input of DRP-MSR are of size $(2n+2)\times (2n+2)$. Let
	\begin{equation*}
		T[i,j]=\left\{\begin{aligned}
			s_j,&\; if\; (i=1\; and\; j\in [1,n])\; or\;(i=2\; and\; j\in [1,n]) \\
			0,&\; otherwise\\
		\end{aligned}
		\right.
	\end{equation*}

	Let $\Delta$ be a sufficiently large integer (it suffices to set $\Delta =B$), and
	\begin{equation*}
		C[i,j]=\left\{\begin{aligned}
			1,&\; if\; (i=1, j\in [3,n+2])\; or\; (i=2,j\in [n+3,2n+2])  \\
			\Delta, &\; if (i=1,j=2)\; or\; (i=2,j=1)\\
			0,&\; otherwise\\
		\end{aligned}
		\right.
	\end{equation*}

	Since the communication cost between machine 1 and 2 are set to $\Delta$ which is sufficiently large, and $T[1,i]\ne 0$, $T[2,i]\ne 0$ for $i\in [1,n]$, then if some $i\in [1,n]$ is assigned to machine 1 or 2 by permutation $\pi$, e.g., $\pi_i=1$, there will be $T[1,i]C[1,\pi_i]= T[1,i]\cdot \Delta$, which incurs a sufficiently large communication cost between machine 1 and 2. Thus, the integers in $[1,n]$ must be assigned to some $j\in [3,2n+2]$.
	Notice that machine 1 connects only to machines $[3,n+2]$, and machine 2 connects only to machines $[n+3,2n+2]$, with communication cost set to 1. Then if some $i\in [1,n]$ is assigned to some $j\in[3,n+2]$, it incurs $a_i$ send cost to machine 1, and 0 send cost to machine 2. On the other hand, if some $i\in [1,n]$ is assigned to some $j\in[n+3,2n+2]$, it incurs 0 send cost to machine 1, and $a_i$ send cost to machine 2. This indeed reflects the cost of the partition.
	
	Actually, for any permutation $\pi\in \Pi(2n+2)$ where $\pi_i\ne 1,2$ for $i\in [1,n]$, the send cost of machine 1 under $\pi$ is the sum of the elements in the set $\{s_i\mid  \pi_i\in [3,n+2]\}$, and the send cost of machine 2 is the sum of the elements in the set $\{s_i\mid  \pi_i\in [n+3,2n+2]\}$. It can be easily verified that the receive cost of other machines are relatively small, since each of the machines in $[3,2n+2]$ only receives one element. Furthermore, the receive cost of machine 1 and 2, and the send cost of the other machines are all 0.	
	Denote $S_{\pi,1}=\{s_i\mid  \pi_i\in [3,n+2]\}, S_{\pi,2}=\{s_i\mid  \pi_i\in [n+3,2n+2]\}$, and $(S_{\pi,1},S_{\pi,2})$ forms a partition of $S$. 
	In such way, the permutation $\pi$ corresponds to the partition $(S_{\pi,1},S_{\pi,2})$. Finally, the MSR cost of the permutation $\pi$, denoted as $MSR_\pi$, is obtained by taking maximum over the send cost of machine 1 and 2,  i.e.,  $ MSR_\pi = \max\left\{\sum\limits_{s_i\in S_{\pi,1}}s_i,\sum\limits_{s_i\in S_{\pi,2}}s_i  \right\}$.
	
	By the above discussion, we can claim that above construction is indeed a reduction from PARTITION to DRP-MSR. If the PARTITION problem admits a perfect partition, then the optimum value of the constructed DRP-MSR instance is exactly $B/2$. If there is no perfect partition for the PARTITION problem, then the optimal value of the constructed DRP-MSR instance must be greater than $B/2$. Then the NP-completeness is proved by observing DRP-MSR is truly in NP.
\qed\end{proof}

\subsection{DRP-SSR}\label{subsec:drp-ssr}

\begin{theorem}\label{thrm:drp-ssr-npc}
	DRP-SSR is NP-complete.
\end{theorem}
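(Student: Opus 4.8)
The plan is to reduce from PARTITION again, and in fact to show that the very same reduction used in the proof of Theorem~\ref{thrm:drp-msr-npc} already works, because on the instance constructed there the SSR objective coincides with the MSR objective. Membership in NP is immediate: given a permutation $\pi$, all the values $send_i$ and $rcv_i$ can be evaluated in $O(n^2)$ time and compared against the target.

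First I would take an arbitrary PARTITION instance $S=\{s_1,\dots,s_n\}$ with $\sum_i s_i=B$ and build the $(2n+2)\times(2n+2)$ matrices $T$ and $C$ exactly as in the DRP-MSR proof: machines $1$ and $2$ both hold the amounts $s_1,\dots,s_n$ destined for virtual machines $1,\dots,n$; machine $1$ is linked with unit cost to the receiver group $[3,n+2]$, machine $2$ with unit cost to the receiver group $[n+3,2n+2]$; and $C[1,2]=C[2,1]=\Delta$ with $\Delta=B$. As argued there, any permutation that sends some $j\in[1,n]$ to machine $1$ or $2$ pays a $\Delta$-term of size at least $B$ in some $send_i$, hence also in $send_i+rcv_i$, so such permutations are dominated and we may restrict attention to permutations with $\pi_j\notin\{1,2\}$ for all $j\in[1,n]$.

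The key observation is that for every such permutation $\pi$, in this instance each machine is either a pure sender (machines $1$ and $2$, with $rcv_i=0$) or a pure receiver (machines $[3,2n+2]$, with $send_i=0$), and each receiver machine collects exactly one amount $s_j$, which is at most the total $\sum_{s_k\in S_{\pi,1}}s_k$ or $\sum_{s_k\in S_{\pi,2}}s_k$ of the side it lies on. Consequently $\max_{i}\{send_i+rcv_i\}=\max\{\sum_{s_k\in S_{\pi,1}}s_k,\ \sum_{s_k\in S_{\pi,2}}s_k\}$, which is precisely the quantity $MSR_\pi$ analysed in Theorem~\ref{thrm:drp-msr-npc}. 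Therefore the constructed instance has optimum exactly $B/2$ when $S$ admits a perfect partition and strictly more than $B/2$ otherwise, which gives NP-hardness and hence NP-completeness.

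I expect the only real obstacle to be verifying carefully that SSR and MSR truly agree on this gadget, i.e., that no machine simultaneously incurs significant send and receive cost and that the per-receiver loads cannot exceed the corresponding partition-side sums; this is a short case check over the rows and columns of $T$ and $C$. If one preferred a reduction in which the additive $send+rcv$ structure genuinely mattered, an alternative is to introduce a dedicated ``hub'' machine that is forced both to ship the data out to the receiver groups and to absorb an equal amount, so that $send+rcv$ at the hub becomes the controlled quantity; but I anticipate the direct reuse of the DRP-MSR construction to be the cleanest route.
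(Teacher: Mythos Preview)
Your proposal is correct and matches the paper's own primary argument essentially verbatim: the paper observes that in the DRP-MSR gadget every machine satisfies $send_i=0$ or $rcv_i=0$, so the SSR and MSR objectives coincide on that instance, and the PARTITION reduction carries over directly. The paper additionally supplies an alternative reduction from \textsc{3-Partition} (grouping $3k$ machines into $k$ triangles), but this is presented as a second, independent route rather than as a correction to the first.
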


\begin{proof}
	The NP-completeness proof for DRP-SSR can be essentially the same with DRP-MSR. In the DRP-MSR instance constructed in the proof of Theorem \ref{thrm:drp-msr-npc}, for any machine $i$ it holds that if $send_i\ne 0$ then $rcv_i=0$, and if $rcv_i\ne 0$ then $send_i=0$. Thus, the MSR and SSR cost value is the same for that instance. 
	An alternate proof is to reduce 3-PARTITION to DRP-SSR. The 3-PARTITION problem is defined as follows.
	
	Input: $3k$ positive integer numbers $s_1,\cdots, s_{3k}$ such that $\sum\limits_{j=1}^{3k}s_j=kB$, where $B$ is a positive integer.\\
	Output: determine whether there exists a partition $S_1,\cdots,S_k$ of the $3k$ numbers, such that $|S_l|=3$ and $\sum\limits_{s_j\in S_l}{s_j}=B$ for all $ l\in [k]$.
	
	3-PARTITION is known to be strongly NP-complete \cite{garey1978strong}, i.e., there is no algorithm whose complexity is bounded by a polynomial of $k$ and $B$ that can solve it, unless P=NP.
	
	For any instance of 3-PARTITION, construct an instance of DRP-SSR with $n=3k$ as follows.
	
	Let $T[i,j]=s_j$ for $i\in [n] , j\in [3k]$. Let $T[i,j]=0$ for $i\in[2k+1,3k] ,  j\in [3k]$. By doing so we have $2k$ machines that all have all the data, and $k$ machines that have no data.
	
	Let $C[i,j]$ be as follows.
	\begin{equation*}
		C[i,j]=\left\{\begin{aligned}
			1,&\; if\; (i\equiv j)\; mod\; k\; and\; i\ne j \\
			0,&\; otherwise\\
		\end{aligned}
		\right.
	\end{equation*}
	By doing so the $3k$ machines are grouped into $k$ groups each with 3 machines. The 3 machines in each group is connected into a triangle.
	
	For any partition of the $3k$ numbers into $k$ subsets $S_i=\{s_{j_1},s_{j_2},s_{j_3}\}$, $i\in [k]$, assign the three numbers $\{s_{j_1},s_{j_2},s_{j_3}\}$ to the three machines  $i,k+i,2k+i$ which are in the $i$-th group. Assuming $s_{j_1}\ge s_{j_2}\ge s_{j_3}$, the assignment must assure that $s_{j_3}$ is assigned to machine $2k+i$ which initially have no data. $s_{j_1}$ and $s_{j_2}$ can be assigned to machine $i, k+i$ arbitrarily. Assume w.l.o.g. that $s_{j_1}$ is assigned to $i$ and $s_{j_2}$ is assigned to $k+i$. It can be verified that this assignment can achieve the smallest possible value of the SSR objective function.
	
	Under this assignment, denote $send_i$ and $rcv_i$ to be the send and receive cost of machine $i$. For $1\le i\le k$, we get	
	\[\begin{aligned}
		&send_i= s_{j_2}+s_{j_3},		&rcv_i=			&s_{j_1}, 			&send_i+rcv_i=s_{j_1}+s_{j_2}+s_{j_3}&\\
		&send_{k+i}=s_{j_1}+s_{j_3},	&rcv_{k+i}=		&s_{j_2},			&send_{k+i}+rcv_{k+i}=s_{j_1}+s_{j_2}+s_{j_3}&\\
		&send_{2k+i}=0,					& rcv_{2k+i}=	&2\cdot s_{j_3},  	&send_{2k+i}+rcv_{2k+i}=2\cdot s_{j_3}&
	\end{aligned}\]
	
	Since $s_{j_1}\ge s_{j_2}\ge s_{j_3}$, we have $2s_{j_3}\le s_{j_1}+s_{j_2}+s_{j_3}$.
	Taking maximum over $i,k+i, 2k+i$, we get the maximum value of $send+rcv$ cost in the $i$-th group as $s_{j_1}+s_{j_2}+s_{j_3}$. The final cost of DRP-SSR under this assignment is obtained by taking maximum over $i\in [k]$.
	
	If there exists a solution of 3-PARTITION, then the above constructed instance of DRP-SSR has an optimum cost of $B$, since for each subset we have $s_{j_1}+s_{j_2}+s_{j_3}=B$. 
	
	If there is no solution of 3-PARTITION, then for the above constructed instance of DRP-SSR, the objective value of SSR cost of any assignment must be greater than $B$. The reason is that there must exist some $i$ such that $send_i+rcv_i=s_{j_1}+s_{j_2}+s_{j_3}> B$.
	
	The description of the reduction from 3-PARTITION to DRP-SSR is completed, and the NP-completeness of DRP-SSR is proved.
\qed\end{proof}

\section{The Problem Series of Data Allocation Problem}\label{sec:dap-cont}
In this section we study the parameterized hardness and algorithms for the DAP-Cont problem series, parameterized by the number of machines. We will use $N$ to denote the size of the input, and $n$ to denote the number of machines.

\begin{definition}[DAP-Cont] \label{def:dap-cont}
	Input: a set $S$ of $N$ integers divided into $n$ subsets $S_1=\{s_{1,1},s_{1,2},\cdots, s_{1,l_1} \},\cdots, S_n=\{ s_{n,1},s_{n,2},\cdots, s_{n,l_n}\}$, where  $n>1$ is the number of machines, and $l_i$ is the size of $S_i$ satisfying $\sum\limits_{i=1}^{n}l_i=N$.\\
	Output: find $n-1$ integers $s^*_1,\cdots s^*_{n-1}\in S$ and a permutation $\pi\in \Pi(n)$, such that the communication cost function chosen from TOTAL, BTNK, MSR and SSR is minimized. Formally,\\
	DAP-TOTAL:
	$$\min_{s^*_1,\cdots s^*_{n-1}\in S} \min_{\pi\in \Pi(n)}\sum_{i=1}^{n}\sum_{j=1}^{n} T[i,j]C[i,\pi_j]   $$
	DAP-BTNK:
	$$\min_{s^*_1,\cdots s^*_{n-1}\in S} \min_{\pi\in \Pi(n)}\max_{i\in[n]}\sum_{j=1}^{n} T[j,\pi^{-1}_i]C[j,i]   $$
	DAP-MSR:
	$$\min_{s^*_1,\cdots s^*_{n-1}\in S} \min_{\pi\in \Pi(n)}\max_{i\in[n]} \left\{\sum\limits_{j=1}^nT[i,j]C[i,\pi_j], \sum\limits_{j=1}^nT[j,\pi^{-1}_i]C[j,i] \right\}   $$
	DAP-SSR:
	$$\min_{s^*_1,\cdots s^*_{n-1}\in S} \min_{\pi\in \Pi(n)}\max_{i\in[n]} \left\{\sum\limits_{j=1}^nT[i,j]C[i,\pi_j]+\sum\limits_{j=1}^nT[j,\pi^{-1}_i]C[j,i] \right\}    $$
	where	$T[i,j]=|S_i\cap (s^*_{j-1},s^*_j]|$ and $s^*_0=-\infty, s^*_n=\infty$. 
\end{definition}

\subsection{The splitter-graph}\label{subsec:splitter-graph}
We introduce the splitter-graph, which transforms the problem of choosing splitters to choosing a path in a special graph. Given a set $S=\{s_1, s_2, \cdots, s_N  \}$ of integers, assuming $s_1\le s_2\le \cdots\le s_N$, and a parameter $n$, construct a graph $G(V,E)$ as follows. For $i\in[n-1], j\in[N]$, construct a vertex $v_{i,j}$. Let $v_{0,0}$ be the starting vertex $-\infty$, and $v_{n,N+1}$ be the end vertex $\infty$. Let $(v_{i,j},v_{i',j'})\in E$ iff $i+1=i'$ and $j<j'$. In such way, a vertex $v_{i,j}$ represents a splitter $s_j$ placed in the $i$-th position, and a path $(-\infty, v_{1,i_1},v_{2,i_2}\cdots v_{n-1,i_{n-1}},\infty)$ represents selecting $s_{i_1},s_{i_2},\cdots s_{i_{n-1}}$ as splitters. From now on, a splitter-graph based on set $S$ with parameter $n$ will be denoted as $G_s(V,E,S,n)$. The discussions in the rest of this section will depend on the splitter-graph with different definition of edge weights. 

\begin{figure}
	\centering
	\subfloat[Demonstration of DAP-Cont.]{\includegraphics[width=0.35\linewidth]{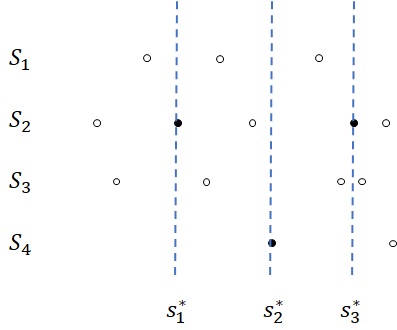}}\label{fig:dap-cont}\hspace{0.05\linewidth}
	\subfloat[The splitter-graph.]{\includegraphics[width=0.5\linewidth]{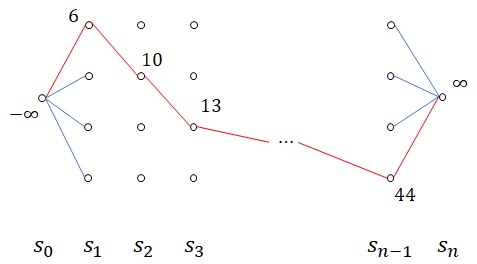}}\label{fig:splitter-graph}
\end{figure}

\subsection{FPT algorithm of DAP-TOTAL and  DAP-BTNK}\label{subsec:dap-total-btnk-fpt}
The FPT algorithm of DAP-TOTAL and  DAP-BTNK is based on the following transformation. Note that the transformation can be done in polynomial time.
Given an instance of DAP-TOTAL, denote $S=\{s_1, s_2, \cdots, s_N  \}$ and assume $s_1\le s_2\le \cdots\le s_N$. Let $s_0=-\infty$ and $s_{N+1}=\infty$. Let $Acc[i,j]=|S_i \cap (-\infty,s_j] |, i\in [n], j\in [0,N+1]$. 
Slightly abusing denotation, let $\pi^{n\times n}$ be a matrix defined based on the permutation $\pi$, such that $\pi[i,j]=1$ if $\pi_i=j$, and $\pi[i,j]=0$ otherwise, $i,j\in [n]$.

Under the above denotations, DAP-TOTAL can be transformed into the following form. 
\begin{equation}\label{eqtn:dap-transform-step-1}
		\min_{s^*_1,\cdots s^*_{n-1}\in S} \min_{\pi\in \Pi(n)}
		\sum_{i=1}^{n}\sum_{j=1}^{n}\sum_{k=1}^{n} (Acc[i,s^*_j]-Acc[i,s^*_{j-1}])C[i,k]\pi[j,k] 
\end{equation}
where $s_0^*=-\infty$ and $s_p^*=\infty$. Let $F[j,k]=\sum\limits_{i=1}^p Acc[i,j]C[i,k]$, $j\in [0,N+1],k\in[n]$, then the above equation is transformed into 
\begin{equation}\label{eqtn:dap-transform-step-2}
	\centering
	\min_{s^*_1,\cdots s^*_{n-1}\in S} \min_{\pi\in \Pi(n)}\sum_{j=1}^{n}\sum_{k=1}^{n} (F[s_j^*,k]-F[s^*_{j-1},k])\pi[j,k]
\end{equation}
Let $Cost[i,j,k]=F[i,k]-F[j,k]$, $0\le j<i\le N+1, k\in[n]$, then Equation \ref{eqtn:dap-transform-step-2} is transformed into
\begin{equation}\label{eqtn:dap-transform-step-3}
	\centering
	\min_{s^*_1,\cdots s^*_{n-1}\in S} \min_{\pi\in \Pi(n)}\sum_{j=1}^{n}\sum_{k=1}^{n} Cost[s_j^*,s_{j-1}^*,k]\pi[j,k]
\end{equation}
If $\pi$ is represented by a permutation, we get  
\begin{equation}\label{eqtn:dap-total-cost}
	\centering
	\min_{s^*_1,\cdots s^*_{p-1}\in S} \min_{\pi\in \Pi(n)}\sum_{j=1}^{n} Cost[s_j^*,s_{j-1}^*,\pi_j] 
\end{equation}

Now we can associate the above $Cost$ function to the spiltter-graph. For each edge $(v_{i,j},v_{i',j'})$ in the splitter-graph and each $l\in[n]$,  let $\omega(v_{i,j},v_{i',j'},l)=Cost[j',j,l]$.  The function $\omega$ can be regarded as assigning $p$ weights to each edge, and each weight is associated with a label  $l\in[n]$. Now, we have the following splitter-graph formation of DAP-TOTAL. 

\begin{definition}\label{def:dap-total-graph}
	Input: a splitter-graph  $G_s(V,E,S,n)$, the weight function $\omega:V\times V\times[n]\rightarrow \mathcal{R}$ of DAP-TOTAL.\\
	Output: a path $(-\infty, v_{1,i_1},v_{2,i_2}\cdots v_{n-1,i_{n-1}},\infty)$, and a permutation $\pi$, such that the following  cost is minimized 

	$$\sum\limits_{j=1}^{n}\omega(v_{j,i_j},v_{j-1,i_{j-1}},\pi_j) $$
\end{definition}
According to the above transformation, Definition \ref{def:dap-total-graph} is equivalent to the original definition of DAP-TOTAL.

\subsubsection{FPT Algorithm for Decision-DAP-TOTAL}
We prove the following decision version of DAP-TOTAL is FPT.

\begin{definition}[Decision-DAP-TOTAL]\label{def:decision-dap-total}
	Input: a splitter-graph $G_s(V,E,S,n)$, the weight function $\omega: V\times V\times[n]\rightarrow \mathcal{R}$ of DAP-TOTAL, a threshold value $\alpha$, and parameter $n$.\\
	Output: Is the optimum value of DAP-TOTAL less than $\alpha$?
\end{definition}

We need the following definition of partial permutations. A partial permutation $\pi$ is a function defined on $[i]$ where $i\in[n]$, such that $\pi_j,\pi_k\in [n]$ and $\pi_j\ne\pi_k$ for $1\le j\ne k\le i$. Here $\pi_j$ is the image of $j$ under $\pi$. Given a partial permutation $\pi$ whose definition domain is $[i]$, and an integer $l\in[n]$, let $l\in \pi$ denote that there exists some $j\in [i]$ such that $\pi_j=l$. Given an integer $l\notin \pi$, let $\pi\cup \{l\}$ be a new partial permutation $\pi'$ defined on $[i+1]$ such that $\pi'_{i+1}=l$ and $\pi'_j=\pi_j$ for $j\in [i]$. Given an integer $l=\pi_{i}$, let $\pi\setminus\{l\}$ be a partial permutation $\pi'$ defined on $[i-1]$, such that $\pi'_j=\pi_j$ for all $j\in [i-1]$. Denote $\Phi$ as the empty partial permutation.

Algorithm \ref{alg:decision-dap-total} is the FPT algorithm for Decision-DAP-TOTAL. The algorithm maintains two arrays of length $O(n!)$ for each vertex $v_{i,j}$, namely Perm($v_{i,j}$) and Cost($v_{i,j},\pi$). 
Perm ($v_{i,j}$) stores all the feasible partial permutations for the path from $-\infty$ to $v_{i,j}$, and Cost($v_{i,j},\pi$) stores the partial accumulated cost value corresponding to the partial permutation $\pi$.

\begin{algorithm}
	\caption{Decision version of DAP-TOTAL}\label{alg:decision-dap-total}
	$Perm(-\infty)\gets \{\Phi\}, Cost(-\infty,\Phi)\gets 0 $\;
	\For{$1\le i\le n$}{
		\For{$1\le j \le N$}{
			\For{$1\le k \le N$}{
				\If{edge $(v_{i-1,k},v_{i,j})$ exists}{
					\For{$1\le l \le n$}{
						\ForEach{partial permutation $\pi\in$ Perm($v_{i-1,k}$)}{\label{line:dap-total:assignment-check}
							\If{$l\notin \pi$ and $Cost(v_{i-1,k},\pi)+\omega(v_{i-1,k},v_{i,j},l)\le \alpha$}{\label{line:sum-check-dap-total}
								Add $\pi\cup \{l\}$ into $Perm(v_{i,j})$\;
								$Cost(v_{i,j},\pi\cup\{l\})\gets Cost(v_{i-1,k},\pi)+\omega(v_{i-1,k},v_{i,j},l)$;
							}
						}
					}
				}
			}
		}
	}
	Return $Yes$ if $Perm(\infty)$ is non-empty, and $NO$ otherwise.
\end{algorithm}

\begin{theorem}\label{thrm:decision-dap-total-induction}
	At the end of the $i$-th iteration of the outer-most $for$-loop in Algorithm \ref{alg:decision-dap-total}, $i\in [n]$, it holds that\\
	(1) each $\pi\in Perm(v_{i,j})$ is a partial permutation, and\\
	(2) $\pi\in Perm(v_{i,j})$ iff there exists a path $(-\infty, v_{1,j_1},\cdots ,v_{i-1,j_{i-1}}, v_{i,j_i})$ such that $\sum\limits_{k=1}^i \omega(v_{k-1,j_{k-1}},v_{k,j_k},\pi(k))\le Cost(v_{i,j},\pi)$.
\end{theorem}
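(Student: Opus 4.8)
The plan is to prove statements (1) and (2) \emph{simultaneously} by induction on the iteration index $i$ of the outermost \textbf{for}-loop of Algorithm~\ref{alg:decision-dap-total}, reading (2) as: $\pi\in Perm(v_{i,j})$ if and only if there is a path $(-\infty,v_{1,j_1},\dots,v_{i-1,j_{i-1}},v_{i,j_i})$ with $j_i=j$ whose induced cost $\sum_{k=1}^i\omega(v_{k-1,j_{k-1}},v_{k,j_k},\pi(k))$ is at most $Cost(v_{i,j},\pi)$ (in particular at most $\alpha$, which is a convenient auxiliary invariant to carry along, since it holds for every inserted pair by the test on line~\ref{line:sum-check-dap-total}). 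For the base case $i=1$ one unwinds the initialization $Perm(-\infty)=\{\Phi\}$, $Cost(-\infty,\Phi)=0$: the body then only inserts, for each $v_{1,j}$ adjacent to the source and each label $l$ with $\omega(v_{0,0},v_{1,j},l)\le\alpha$, the singleton partial permutation $1\mapsto l$ with recorded cost $\omega(v_{0,0},v_{1,j},l)$, so both claims hold with the witnessing path being the single edge $(-\infty,v_{1,j})$.

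For the inductive step, assume (1) and (2) hold after iteration $i-1$. Part (1) at iteration $i$ is immediate: anything placed into $Perm(v_{i,j})$ has the form $\pi\cup\{l\}$ with $\pi\in Perm(v_{i-1,k})$ a partial permutation on $[i-1]$ (induction hypothesis) and $l\notin\pi$, hence a partial permutation on $[i]$. For the forward direction of (2): if $\pi'\in Perm(v_{i,j})$, it was inserted while the algorithm processed some predecessor $v_{i-1,k}$, label $l$, and $\pi\in Perm(v_{i-1,k})$ with $\pi'=\pi\cup\{l\}$ and $Cost(v_{i-1,k},\pi)+\omega(v_{i-1,k},v_{i,j},l)\le\alpha$, and the recorded value $Cost(v_{i,j},\pi')$ equals this right-hand side (for the last predecessor that did so). Applying the induction hypothesis to $\pi$ gives a path to $v_{i-1,k}$ realizing $\pi$ of cost at most $Cost(v_{i-1,k},\pi)$; appending the edge $(v_{i-1,k},v_{i,j})$ yields a path realizing $\pi'$ whose cost is at most $Cost(v_{i-1,k},\pi)+\omega(v_{i-1,k},v_{i,j},l)=Cost(v_{i,j},\pi')$, which is exactly what is required (here we only need \emph{existence} of such a path, so the ``last-predecessor-wins'' behavior of the assignment causes no trouble).

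The backward direction of (2) is where the real work is. Given a path $(-\infty,v_{1,j_1},\dots,v_{i,j_i})$ with $j_i=j$ realizing $\pi$ and cost at most $\alpha$, its prefix ending at $v_{i-1,j_{i-1}}$ realizes the partial permutation $\pi\setminus\{\pi(i)\}$ on $[i-1]$; crucially, \emph{every} prefix sum of the path is bounded by its total cost, hence by $\alpha$, because each edge weight $\omega(\cdot,\cdot,\cdot)=Cost[\cdot,\cdot,\cdot]$ is nonnegative, being a difference $F[j',l]-F[j,l]$ with $j'>j$ of the monotone-in-the-first-index quantity $F$. By the induction hypothesis, $\pi\setminus\{\pi(i)\}\in Perm(v_{i-1,j_{i-1}})$ with its recorded cost at most the prefix cost, so when the algorithm later examines the predecessor $v_{i-1,j_{i-1}}$, the label $\pi(i)$ (which is not in $\pi\setminus\{\pi(i)\}$), and this partial permutation, the test on line~\ref{line:sum-check-dap-total} succeeds and $\pi=(\pi\setminus\{\pi(i)\})\cup\{\pi(i)\}$ is inserted into $Perm(v_{i,j})$, with the recorded $Cost(v_{i,j},\pi)$ bounded as needed. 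The main obstacle I expect is precisely this step: arguing that the pruning of line~\ref{line:sum-check-dap-total} never discards an intermediate partial permutation that is extendable to a within-budget full solution. This rests entirely on nonnegativity of $\omega$ (monotonicity of accumulated cost along a path), so that I should state and use that fact explicitly; a secondary, purely bookkeeping subtlety is that a partial permutation may be reinserted from several predecessors, which is why (2) is phrased with ``$\le Cost(v_{i,j},\pi)$'' rather than an equality with the path-optimal value, and the proof must be written to exploit that slack on both sides.
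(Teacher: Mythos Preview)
Your proposal is correct and follows the same inductive approach as the paper's proof. You are in fact more careful than the paper about the backward (completeness) direction: you make explicit the need for nonnegativity of $\omega$ so that prefix sums of a within-budget path remain within budget, and you address the bookkeeping issue of $Cost(v_{i,j},\pi)$ being overwritten by multiple predecessors---both points the paper's terse argument leaves implicit.
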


\begin{proof}
		(1) Straightforward. The condition in Line \ref{line:dap-total:assignment-check}  of Algorithm \ref{alg:decision-dap-total} ensures that if $l\in \pi$ then $\pi\cup \{l\}$ will not be added into $Perm(v_{i,j})$, which ensures that each $\pi\in Perm(v_{i,j})$ is a partial permutation.
		
		(2) The proof proceeds by induction on $i$. As start point of induction where $i=0$, the path is from $-\infty$ to $-\infty$, i.e., a single vertex. Then the claim trivially holds. 
		
		Suppose the claim holds at the end of the $(i-1)$-th iteration, and consider the $i$-th iteration.	
		According to Line 9 and 10  in Algorithm \ref{alg:decision-dap-total}, $\pi\in Perm(v_{i,j})$ if and only if there is an edge $(v_{i-1,j_{i-1}},v_{i,j})$ and an integer $l=\pi_{i}$  such that  $Cost(v_{i-1,j_{i-1}},\pi\setminus\{l\})+\omega(v_{i-1,j_{i-1}},v_{i,j},l)\le \alpha$. By induction hypothesis, the partial permutation $\pi\setminus\{l\}\in Perm(v_{i-1,j_{i-1}})$ if and only if there exists a path from $-\infty$ to $v_{i-1,j_{i-1}}$, $(-\infty, v_{1,j_1},\cdots ,v_{i-1,j_{i-1}})$,  such that 
		$$\sum\limits_{k=1}^{i-1} \omega(v_{k-1,j_{k-1}},v_{k,j_k},\pi_k)\le Cost(v_{i-1,j_{i-1}},\pi\setminus\{l\})$$
		Now adding the edge $(v_{i-1,k},v_{i,j})$ to the path, and adding $l$ to the partial assignment $\pi\setminus\{l\}$, we obtain a path from $-\infty$ to $v_{i,j}$ and a partial assignment $\pi$ such that  $$\sum\limits_{k=1}^i \omega(v_{k-1,j_{k-1}},v_{k,j_k},\pi_k)\le Cost(v_{i,j},\pi)$$
		By induction, the claim is proved.
\qed\end{proof}

According to Theorem \ref{thrm:decision-dap-total-induction}, by applying the induction to the vertex $\infty$, 
it holds that $Perm(\infty)\ne\emptyset$ if and only if there exists a path $(-\infty, v_{1,j_1},\cdots ,v_{n-1,j_{n-1}}, \infty)$, such that $\sum\limits_{k=1}^n \omega(v_{k-1,j_{k-1}},v_{k,j_k},\pi_k)\le Cost(\infty,\pi)\le \alpha$, where $\pi$ is a permutation in $Perm(\infty)$. 
By the splitter-graph formation of Decision-DAP-TOTAL (Definition \ref{def:dap-total-graph}), it is equivalent to that the optimum value of DAP-TOTAL is less than $\alpha$. This completes the correctness proof of Algorithm \ref{alg:decision-dap-total}. 

\begin{theorem}
	Decision-DAP-TOTAL is FPT.
\end{theorem}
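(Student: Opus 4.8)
The plan is to certify that Algorithm~\ref{alg:decision-dap-total} is an FPT algorithm for Decision-DAP-TOTAL, i.e.\ that it runs in time $f(n)\cdot N^{O(1)}$ for some computable function $f$ of the parameter $n$ alone. Its correctness is already in hand: Theorem~\ref{thrm:decision-dap-total-induction} together with the discussion immediately following it shows that the algorithm returns $Yes$ exactly when the optimum of DAP-TOTAL is below $\alpha$. So the entire task reduces to a running-time analysis.

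First I would dispose of the preprocessing. The tables $Acc$, $F$, and $Cost$ from Equations~\ref{eqtn:dap-transform-step-1}--\ref{eqtn:dap-transform-step-3}, and hence the weight function $\omega$, are tabulated in polynomial time: $Acc$ has $O(nN)$ entries, each computable incrementally; $F[j,k]=\sum_{i=1}^{n}Acc[i,j]C[i,k]$ has $O(nN)$ entries costing $O(n)$ apiece; and $Cost[i,j,k]=F[i,k]-F[j,k]$ has $O(nN^2)$ entries costing $O(1)$ apiece. After this $O(n^2N+nN^2)$ work, every lookup $\omega(v_{i-1,k},v_{i,j},l)=Cost[j,k,l]$ used in Lines~9--10 is $O(1)$. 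The splitter-graph $G_s(V,E,S,n)$ has $O(nN)$ vertices and $O(nN^2)$ edges, all enumerable within the same budget.

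The key step, and the only one that needs care, is bounding the size of the arrays $Perm(v_{i,j})$. By part~(1) of Theorem~\ref{thrm:decision-dap-total-induction} every element of $Perm(v_{i,j})$ is a partial permutation, and since every path reaching a level-$i$ vertex uses exactly $i$ edges, each such partial permutation has domain $[i]$. There are only $n!/(n-i)!\le n!$ partial permutations with domain $[i]$ over $[n]$, so $|Perm(v_{i,j})|\le n!$ throughout, provided $Perm$ is kept as a set (each partial permutation inserted at most once, as the pseudocode's ``Add'' intends). Each operation inside the innermost block — the test $l\notin\pi$ in Line~9, the insertion of $\pi\cup\{l\}$, and the assignment to $Cost(v_{i,j},\pi\cup\{l\})$ — manipulates a length-$O(n)$ encoding of a partial permutation and therefore costs $\mathrm{poly}(n)$. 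This is precisely where the restriction to partial permutations (rather than arbitrary length-$i$ label sequences) pays off: it is what prevents $Perm$ from blowing up exponentially in $N$ and lets the parameter $n$ absorb the combinatorial explosion.

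Assembling the bounds: the four nested loops range over $i\in[n]$, $j\in[N]$, $k\in[N]$, $l\in[n]$, and the innermost \emph{for each} loop over $\pi\in Perm(v_{i-1,k})$ has at most $n!$ iterations, each doing $\mathrm{poly}(n)$ work. Hence the overall time is $O\!\left(n\cdot N\cdot N\cdot n\cdot n!\cdot\mathrm{poly}(n)\right)=n!\cdot\mathrm{poly}(n)\cdot N^{2}$, which is of the form $f(n)\cdot N^{O(1)}$ with the computable function $f(n)=n!\cdot\mathrm{poly}(n)$. Together with the correctness furnished by Theorem~\ref{thrm:decision-dap-total-induction}, this shows Decision-DAP-TOTAL is FPT. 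The main obstacle to watch for is exactly the $|Perm(v_{i,j})|\le n!$ bookkeeping claim; everything else is routine loop counting.
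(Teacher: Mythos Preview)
Your proposal is correct and follows essentially the same approach as the paper: both argue that Algorithm~\ref{alg:decision-dap-total} runs in time $f(n)\cdot N^{2}$ by bounding $|Perm(v_{i,j})|\le n!$ and counting the nested loops, with correctness inherited from Theorem~\ref{thrm:decision-dap-total-induction}. The paper simply states the running time as $O(N^{2}\,n!\,n)$ without spelling out the preprocessing or the $|Perm|$ bound, so your version is a more detailed rendition of the same argument.
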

\begin{proof}
	This theorem is true since Algorithm \ref{alg:decision-dap-total} solves Decision-DAP-TOTAL in $O(N^2n!n)$ time. Note that here the number of machines $n$ is regarded as the parameter, and this complexity is polynomial in the input size $N$.
\qed\end{proof}

\subsubsection{FPT Algorithm for DAP-BTNK}
Using a transformation similar with that for DAP-TOTAL, we have the following splitter-graph formation for DAP-BTNK.

\begin{definition}\label{def:dap-btnk-graph}
	Input: a splitter-graph $G_s(V,E,S,n)$, the weight function $\omega:V\times V\times[n]\rightarrow \mathcal{R}$ of DAP-BTNK.\\
	Output: a path $(-\infty, v_{1,i_1},v_{2,i_2}\cdots v_{n-1,i_{n-1}},\infty)$, and a permutation $\pi$, such that the following cost is minimized 
	$$\max\limits_{j\in[n]}\omega(v_{j,i_j},v_{j-1,i_{j-1}},\pi_j) $$
\end{definition}

The decision version of DAP-BTNK has an extra value $\alpha$ as input, and asks whether the optimum value of DAP-BTNK is less than $\alpha$. We first propose the FPT algorithm for the decision version, which is given as Algorithm \ref{alg:decision-dap-btnk}. It needs one array for each vertex $v_{i,j}$ which is Perm($v_{i,j}$). The algorithm is similar with that for Decision-DAP-TOTAL, only changing the sum-check (Line \ref{line:sum-check-dap-total} in Algorithm \ref{alg:decision-dap-total}) to maximum check (Line \ref{line:maximum-check-dap-btnk} in Algorithm \ref{alg:decision-dap-btnk}). Thus the correctness proof of this algorithm is similar with Theorem \ref{thrm:decision-dap-total-induction}, and it is omitted.

\begin{algorithm}
	\caption{Decision version of DAP-Continuous-BTNK}\label{alg:decision-dap-btnk}
	$Perm(-\infty)\gets\Phi$\;
	\For{$1\le i\le n$}{
		\For{$1\le j \le N$}{
			\For{$1\le k \le N$}{
				\If{edge $(v_{i-1,k},v_{i,j})$ exists}{
					\For{$1\le l \le n$}{
						\ForEach{partial permutation $\pi\in$ Perm($v_{i-1,k}$)}{
							\If{$l\notin \pi$ and $\omega(v_{i-1,k},v_{i,j},l)\le \alpha$}{\label{line:maximum-check-dap-btnk}
								Add $\pi\cup \{l\}$ into $Perm(v_{i,j})$\;
							}
						}
					}
				}
			}
		}
	}
	Return $Yes$ if $Perm(\infty)$ is non-empty, and $NO$ otherwise.
\end{algorithm}

\begin{theorem}\label{thrm:dap-btnk-fpt}
	DAP-BTNK is FPT.
\end{theorem}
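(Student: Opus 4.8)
The plan is to reuse the scaffolding already built for DAP-TOTAL. First I would pin down the correctness of Algorithm~\ref{alg:decision-dap-btnk} for the decision version by stating and proving the exact analogue of Theorem~\ref{thrm:decision-dap-total-induction}: at the end of the $i$-th iteration of the outermost loop, every $\pi\in Perm(v_{i,j})$ is a partial permutation, and $\pi\in Perm(v_{i,j})$ if and only if there is a path $(-\infty,v_{1,j_1},\dots,v_{i-1,j_{i-1}},v_{i,j})$ in the splitter-graph with $\omega(v_{k-1,j_{k-1}},v_{k,j_k},\pi_k)\le\alpha$ for every $k\in[i]$. The induction is in fact lighter than in the TOTAL case because there is no accumulated cost to carry: the only structural change is that the additive test of Line~\ref{line:sum-check-dap-total} in Algorithm~\ref{alg:decision-dap-total} is replaced by the per-edge test of Line~\ref{line:maximum-check-dap-btnk}, and combining this single-edge test with the induction hypothesis immediately yields that the \emph{maximum} weight along the whole prefix path is at most $\alpha$. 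Instantiating the invariant at the sink $\infty$ then shows Algorithm~\ref{alg:decision-dap-btnk} answers \emph{Yes} exactly when the optimum of the splitter-graph formulation (Definition~\ref{def:dap-btnk-graph}), hence of DAP-BTNK, is at most $\alpha$.

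Second, I would bound the running time. Algorithm~\ref{alg:decision-dap-btnk} has the same loop nesting as Algorithm~\ref{alg:decision-dap-total} — $O(n)$ outer iterations, $O(N^2)$ edge candidates, and, per vertex, at most $O(n!)$ partial permutations extended by $O(n)$ labels — so it runs within the same $O(N^2 n! n)$ bound, which is polynomial in $N$ for fixed parameter $n$. Thus the decision version is FPT.

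Third — the only genuinely new ingredient beyond the DAP-TOTAL development — I would lift the decision algorithm to the optimization problem. The key observation is that the BTNK objective $\max_{j\in[n]}\omega(v_{j,i_j},v_{j-1,i_{j-1}},\pi_j)$ is always attained at some edge/label triple, so the optimum value lies in the finite set $W=\{\omega(v_{i,j},v_{i',j'},l):(v_{i,j},v_{i',j'})\in E,\ l\in[n]\}$, which has $O(N^2 n)$ elements and can be enumerated and sorted in polynomial time. A binary search over the sorted entries of $W$, calling Algorithm~\ref{alg:decision-dap-btnk} at each step, locates the optimum value using $O(\log(N^2 n))$ decision calls; a final run at the optimal threshold followed by backtracking through the $Perm$ arrays recovers a witnessing path and permutation. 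The overall time remains of the form $f(n)\cdot\mathrm{poly}(N)$, so DAP-BTNK is FPT.

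I do not expect a deep obstacle: the substantive modelling (the splitter-graph, the reduction of the objective to edge-label weights, and the dynamic program over partial permutations) was already carried out for DAP-TOTAL, and the max-version only simplifies the induction. The one place needing care is the decision-to-optimization step — showing that the candidate value set $W$ is polynomially bounded and that an actual solution, not merely a yes/no answer, is extractable — but this is routine.
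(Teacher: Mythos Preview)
Your proposal is correct and follows essentially the same approach as the paper: prove the decision version FPT via the partial-permutation dynamic program on the splitter-graph, then observe that the BTNK optimum is attained at one of the at most $N^2 n$ weight values and iterate the decision algorithm over those candidates. The paper simply enumerates all candidate thresholds (at most $N^2 n$ calls) rather than binary-searching the sorted list as you do, but this is a minor efficiency refinement, not a different argument.
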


\begin{proof}
	We can use the algorithm for Decision-DAP-BTNK to solve DAP-BTNK. The idea is similar with the two-phase algorithm for Linear Bottleneck Assignment Problem given in Section \ref{subsec:drp-ssr}. In the first phase the algorithm chooses some possible value from the input weight function $\omega:V\times V\times[n]\rightarrow \mathcal{R}$. In the second phase Algorithm \ref{alg:decision-dap-btnk} is invoked by setting $\alpha$ as the selected weight value. Since the number of possible values of the input weight function $\omega:V\times V\times[n]\rightarrow \mathcal{R}$ is at most $N^2n$, then Algorithm \ref{alg:decision-dap-btnk} is invoked for at most $N^2n$ times. Thus, the two-phase algorithm for DAP-BTNK is still FPT.
\qed\end{proof}

\subsection{W[1]-completeness of DAP-MSR and  DAP-SSR}\label{subsubsec:dap-msr-ssr-hardness}
We first prove that the two problems are in W[1].
\begin{theorem}
	DAP-MSR and DAP-SSR are in W[1].
\end{theorem}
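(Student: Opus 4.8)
The number of candidate solutions of a DAP-MSR (or DAP-SSR) instance is $O(N^{n-1}\cdot n!)$ --- one for each choice of $n-1$ splitters and each permutation --- so brute force only places these problems in XP; pinning them down in W[1] needs a sharper argument. The plan is to use the machine characterization of W[1] due to Chen, Flum and Grohe: a parameterized problem lies in W[1] exactly when it is decided by a nondeterministic RAM program that runs in $f(k)\cdot|x|^{O(1)}$ time, makes a number of nondeterministic steps bounded by a function of the parameter, and is \emph{tail-nondeterministic}, i.e.\ all of those nondeterministic steps fall among the last $h(k)$ steps of the run. I would build such a program for DAP-MSR with the number of machines $n$ as the parameter, and then observe that changing one line makes it work for DAP-SSR.

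The program first runs a purely deterministic preprocessing phase, in time polynomial in the input length $N$: it sorts $S$, builds the table $Acc[i,j]=|S_i\cap(-\infty,s_j]|$ for $i\in[n]$ and $j\in[0,N+1]$ exactly as in Section~\ref{subsec:dap-total-btnk-fpt}, and keeps the matrix $C$ to hand. Only then does it enter the nondeterministic phase, where it guesses the solution: the splitters as $n-1$ indices $1\le p_1\le\cdots\le p_{n-1}\le N$ into the sorted order (with $p_0=0$, $p_n=N+1$), and a permutation $\pi\in\Pi(n)$ written out as its $n$ values $\pi_1,\dots,\pi_n\in[n]$ --- in total $2n-1$ guessed numbers, each of magnitude at most the input length, hence $O(n)$ nondeterministic steps. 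It then verifies deterministically: reject unless $p_0\le\cdots\le p_n$ and $\pi$ is a permutation; set $T[i,j]=Acc[i,p_j]-Acc[i,p_{j-1}]$; compute $send_i=\sum_{j=1}^{n}T[i,j]\,C[i,\pi_j]$ for every $i\in[n]$ and $rcv_k=\sum_{m=1}^{n}T[m,\pi^{-1}_k]\,C[m,k]$ for every $k\in[n]$; accept iff $\max_{i\in[n]}\{send_i,rcv_i\}\le\alpha$. By the definition of $T$ in Definition~\ref{def:dap-cont} this accepts iff the optimum of DAP-MSR on the instance is at most $\alpha$, so the program decides the decision version of DAP-MSR. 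Since every $T[i,j]$ and every entry of $C$ used is obtained by $O(1)$ look-ups in the precomputed tables, the whole verification costs only $O(n^2)$ steps; hence every nondeterministic step lies within the last $O(n^2)$ steps of the run, and, together with the polynomial-in-$N$ total running time and the $O(n)$ bound on the number of guesses, this is precisely a tail-nondeterministic FPT computation. Thus DAP-MSR is in W[1]. For DAP-SSR only the acceptance test changes, to $\max_{i\in[n]}\{send_i+rcv_i\}\le\alpha$, which is again evaluable in $O(n^2)$ steps from the guessed data, so the identical argument gives that DAP-SSR is in W[1].

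The step I expect to be the main obstacle is the one that separates W[1] from W[P]: the verification that follows the guesses must take a number of steps depending on $n$ alone, not on $N$. This is exactly why $Acc$ and $C$ are built \emph{before} the nondeterministic phase, so that each of the $O(n^2)$ products $T[i,j]C[i,\pi_j]$ and $T[m,\pi^{-1}_k]C[m,k]$ reduces to constantly many look-ups; one also has to check that the guessed quantities are numbers bounded by a polynomial in the input size (they are: split indices lie in $\{0,\dots,N+1\}$ with $N\le|x|$, and permutation values lie in $[n]$ with $n^2\le|x|$ since $C$ is $n\times n$), and that the partial sums held in registers stay polynomially bounded, which holds under the usual convention that the entries of $C$ are. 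A conceivable alternative would be a direct fpt-reduction to Multicolored Clique on $O(n)$ color classes, with class $j$ holding vertices that encode a triple $(p_{j-1},p_j,\pi_j)$ and carry the local test $rcv_{\pi_j}\le\alpha$, and edges enforcing agreement on shared split indices and distinctness of the $\pi_j$; but the send-cost bound $\sum_j T[i,j]C[i,\pi_j]\le\alpha$ is inherently global and is not captured by pairwise adjacencies, so that route would need extra gadgetry and is messier than the machine argument.
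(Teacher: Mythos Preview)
Your proposal is correct and follows essentially the same approach as the paper: both invoke the Chen--Flum--Grohe machine characterization of $W[1]$, do polynomial preprocessing, nondeterministically guess the $n-1$ splitters, and finish with a tail computation bounded by a function of $n$ alone. The only cosmetic difference is that you also guess the permutation $\pi$ and then verify in $O(n^2)$ steps, whereas the paper guesses only the splitters and then deterministically enumerates all $n!$ permutations in the tail; either variant satisfies the tail-nondeterminism condition since both $O(n^2)$ and $O(n!)$ are functions of the parameter.
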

\begin{proof}
	The proof is based on the machine characterization of the W[1] class proposed in \cite{chen2003bounded}. The following definition and theorem must be cited from \cite{chen2003bounded} to support this proof.
	
	\begin{definition}[W-program, \cite{chen2003bounded}]\label{def:w-program}
		A nondeterministic RAM program $\mathbb{P}$ is a W-program, if there is a function $f$ and a polynomial $p$ such that for every input $(x, k)$ with $|x| = n$, the program $\mathbb{P}$ on every run\\
		(1) performs at most $f(k)\cdot p(n)$ steps;\\
		(2) at most $f(k)$ steps are nondeterministic;\\
		(3) at most the first $f(k)\cdot p(n)$ registers are used;\\
		(4) at every point of the computation the registers contain numbers $\le f(k)p(n)$;
	\end{definition}
	\begin{theorem}[\cite{chen2003bounded}]\label{thrm:machine-w1}
		Let $Q$ be a parameterized problem. Then $Q\in$ W[1] if and only if, there is a computable function $h$ and a W-program $\mathbb{P}$ deciding $Q$ such that for every run of $\mathbb{P}$ all nondeterministic steps are among the last $h(k)$ steps of the computation, where $k$ is the parameter.
	\end{theorem}

	First we should note that if term (4) in Definition \ref{def:w-program} is to be satisfied, the elements in the communication cost matrix, and the edge weights in the splitter-graph, should be bounded by $f(n)p(N)$. Under this constraint, we describe the W-program for DAP-MSR which is quite simple. \\
	(1) Transform the input to splitter-graph formation.\\
	(2) Non-deterministically guess a path with length $n$. Note that by the definition in \cite{chen2003bounded}, the nondeterministic machine can guess an integer in one nondeterministic step, rather than guess a single bit. In such way the nondeterministic machine only need to perform $n$ nondeterministic steps to guess the path.\\ 
	(3) Enumerate all the $n!$ permutations, and find the optimal permutation with the smallest MSR cost value.
	
	It is obvious that the above program (1) performs at most $O(N^2n)+O(n!)$ steps, (2) perform only $n$ nondeterministic guess steps, (3) uses $O(N^2n)$ registers to record the input, $O(n)$ registers to record the selected path, $O(n)$ registers to record the enumerated permutation, and constant registers to conduct extra numeric operations. Furthermore, after the path is guessed, the subsequent computation needs $O(n!)$ time which satisfies the condition given in Theorem \ref{thrm:machine-w1}.
\qed\end{proof}

We then prove the W[1]-hardness of the two problems.
With an idea similar with that in Section \ref{subsec:dap-total-btnk-fpt}, we first transform DAP-MSR and DAP-SSR into a splitter-graph formation. We only describe the transformation for DAP-MSR, and it is similar for the other. Using the same denotations used in Equation \ref{eqtn:dap-transform-step-1}, we have the following equivalent form for DAP-MSR.
\begin{equation}\label{eqtn:dap-msr-transform-step-1}
		\begin{aligned}
			&\min_{s^*_1,\cdots s^*_{n-1}\in S} \min_{\pi\in \Pi(n)}\max\limits_{i\in [n]}
			&\left\{\begin{aligned}
				\sum\limits_{j=1}^{n}\sum\limits_{k=1}^{n} (Acc[i,s^*_j]-Acc[i,s^*_{j-1}])C[i,k]\pi[j,k]\\
				\sum\limits_{j=1}^{n}\sum\limits_{k=1}^{n} (Acc[j,s_k^*]-Acc[j,s_{k-1}^*])C[j,i]\pi[k,i]
			\end{aligned}\right\}		
	\end{aligned}
\end{equation}

Let $V$ be a $N\times N$ matrix  where each element is a vector of length $n$, and  let $V[j,k][i]=Acc[i,j]-Acc[i,k]$, $i\in [n], j,k\in [0,N+1]$, then Equation \ref{eqtn:dap-msr-transform-step-1} is transformed into
\begin{equation*}\label{eqtn:dap-msr-transform-step-2}
	\min_{s^*_1,\cdots s^*_{p-1}\in S} \min_{\pi\in \Pi(n)}\max\limits_{i\in [n]}
	\left\{\begin{aligned}	\sum\limits_{j=1}^{p}\sum\limits_{k=1}^{p}
		V[s_j^*,s_{j-1}^*][i]C[i,k]\pi[j,k]\\
		\sum\limits_{j=1}^{p}\sum\limits_{k=1}^{p}V[s_k^*,s_{k-1}^*][j]C[j,i]\pi[k,i]    
	\end{aligned}\right\}
\end{equation*}

Next we give the following splitter-graph formation of DAP-MSR. For each edge $(v_{i,j},v_{i',j'})$ in the splitter-graph, let $\omega(v_{i,j},v_{i',j'})= V[j',j]$, i.e., each edge is associated with a vector of length $n$. In such way, each path from $-\infty$ to $\infty$ corresponds to $n$ vectors of length $n$, and can form a $n\times n$ matrix $T$. It remains to solve the DRP problem, taking this matrix $T$ and the communication cost matrix $C$ as input. The formal definition is given as follows.

\begin{definition}
	Input: a splitter-graph $G_s(V,E,S,n)$, the edge weight function $\omega: V\times V\rightarrow \mathcal{R}^n$ of DAP-MSR, the communication cost matrix $C^{n\times n}$.\\
	Output: a path $(-\infty, v_{1,k_1},v_{2,k_2}\cdots v_{n-1,k_{n-1}},\infty)$, which corresponds a matrix $T^{n\times n}$ where $T[i,j]=\omega(v_{j,k_j},v_{j-1,k_{j-1}})[i]$, and a permutation $\pi$, such that the following  MSR cost function is minimized:
	$$\max\limits_{i\in[n]} \left\{\sum\limits_{j=1}^nT[i,j]C[i,\pi_j], \sum\limits_{j=1}^nT[j,\pi^{-1}_i]C[j,i] \right\}$$
\end{definition}

To prove the W[1]-hardness of the problem, we introduce an intermediate problem called Selecting-PARTITION. The idea is to reduce $k$-clique, which is W[1]-complete, to Selecting-PARTITION, and reduce Selecting-PARTITION to DAP-MSR (and similarly to DAP-SSR).

\begin{definition}[Selecting-PARTITION]
	Input: $n$ integers $S=\{s_1,s_2,\cdots,s_n \}$, target sum value $B$, and parameter $k$.\\
	Output: decide whether there exists a set $A\subset S$ with $|A|=k$, such that $A$ is a Yes-instance of PARTITION, i.e., there exists $A_1,A_2$ such that $A_1\cap A_2=\emptyset, A_1\cup A_2=A$ and $\sum_{s_i\in A_1}s_i=\sum_{s_i\in A_2}s_i=B/2$.
\end{definition}

\begin{lemma}\label{lema:para-reduction-selecting-partition}
	There is a parameterized reduction from $k$-clique to Selecting-PARTITION.
\end{lemma}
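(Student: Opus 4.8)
The goal is a parameterized reduction from $k$-clique to Selecting-PARTITION, i.e. given a graph $H=(V_H,E_H)$ with $|V_H|=m$ and the parameter $k$, produce in FPT time a set $S$ of integers, a target value $B$, and a parameter $k'=g(k)$ such that $H$ has a $k$-clique iff some $k'$-element subset of $S$ can be perfectly partitioned. The classical way to encode a clique arithmetically is the \emph{large-base / digit-block} trick, as in the standard reduction for \textsc{Subset Sum} or for proving \textsc{Unary-}$k$-clique style results, so the plan is to adapt that to force \emph{exactly} a clique's worth of vertices and edges to be selected.

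**Key steps in order.** First I would fix $k'=k+\binom{k}{2}$: the selected subset $A$ must correspond to picking $k$ ``vertex'' numbers and $\binom{k}{2}$ ``edge'' numbers. Second, I would lay out each integer in $S$ as a number in base $N$ (with $N$ a power of two, say $N = m^{c}$ for a suitable constant $c$, so everything stays polynomially bounded) whose digit blocks are: one \emph{counter} block that equals $1$ for every vertex-number and every edge-number (so that $|A|=k'$ is forced by putting the right total in that block of $B$), a group of $m$ (or $O(\log m)$ via a binary incidence encoding) \emph{incidence} blocks, and possibly a ``selector'' block distinguishing vertex-type from edge-type so that exactly $k$ of the $k'$ chosen numbers are vertices and exactly $\binom{k}{2}$ are edges. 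For vertex $v$, its number has $1$'s in the incidence blocks indexed by $v$ (with an appropriate multiplicity, say $k-1$, matching how many selected edges should touch $v$); for edge $\{u,v\}$, its number has a $1$ in the incidence blocks indexed by $u$ and by $v$. Third, I would set $B$ so that each used incidence block sums to exactly $k$ (the $k-1$ contributions from one chosen vertex plus the $1$ from... — actually one wants the vertex contributing $k-1$ and its $k-1$ incident chosen edges each contributing $1$, forcing all $\binom{k}{2}$ edges to lie among the chosen $k$ vertices, which is exactly the clique condition), while unused blocks sum to $0$. Since $N$ exceeds the maximum possible column sum, there are no carries, so a subset summing to the ``clique target'' $B^\star$ exists iff $H$ has a $k$-clique. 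Fourth — and this is the extra twist needed for Selecting-PARTITION rather than plain Subset Sum — I would double the construction: include for each original number $s$ a companion number $s' = M - s$ where $M$ is a global constant larger than any single $s$, arranged so that a perfect partition of a $k'$-subset into two equal halves is possible exactly when one half realizes $B^\star$ and the other its complement; concretely one adds a high-order ``balance'' digit and pads so the only way to split a chosen $2k'$-subset (now $k'$ originals plus their $k'$ companions) into two equal-sum halves is the intended one. I would then set the Selecting-PARTITION parameter to $2k'$ and $B$ to the total of the selected block.

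**Main obstacle.** The delicate part is \emph{Selecting-}PARTITION's two simultaneous combinatorial freedoms — which subset $A$ to pick \emph{and} how to split $A$ into $A_1,A_2$ — both of which the reduction must pin down. In ordinary Subset Sum reductions one only controls which elements are chosen; here I must design the digit layout so that (i) any $k'$-subset that sums correctly is forced to have the clique structure, and (ii) within such a subset the only equal-split is the one separating the ``forward'' copies from the ``$M-s$'' copies, with no spurious rebalancing across the incidence or counter blocks. Getting the magnitudes of $M$, of $N$, and of the per-block targets mutually consistent (no carries, no accidental cancellation, and still only $f(k)\cdot\mathrm{poly}(m)$-bounded values as required downstream for the W[1]-membership-style argument) is the real bookkeeping burden; once the block arithmetic is set up correctly, the equivalence ``$H$ has a $k$-clique $\iff$ the constructed instance is a Yes-instance of Selecting-PARTITION'' falls out by the standard no-carry argument, and the map is clearly computable in time polynomial in $m$ with parameter depending only on $k$, hence an FPT reduction.
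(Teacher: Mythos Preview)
Your incidence-block intuition is on the right track, but the proposal takes an unnecessary detour through \textsc{Subset Sum} and then a companion/doubling trick, and each of those two steps has a real gap.

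\textbf{The Subset-Sum step does not go through as written.} You say you will ``set $B$ so that each used incidence block sums to exactly [$2(k-1)$] \ldots\ while unused blocks sum to $0$.'' But you do not know in advance \emph{which} $k$ incidence blocks are ``used''---that is precisely the identity of the clique you are looking for. A fixed target $B^\star$ cannot have nonzero digits in exactly the (unknown) clique positions and zeros elsewhere. Standard fixes (starting from \textsc{Multicolored Clique}, or adding per-block filler numbers) are not mentioned, and filler numbers would interact badly with the later partition step. As stated, the ``no-carry $\Rightarrow$ clique'' implication fails because there is no well-defined target to hit.

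\textbf{The companion step is also underspecified.} With originals $s$ and companions $M-s$ and parameter $2k'$, nothing forces a selected $2k'$-subset to consist of $k'$ originals together with \emph{their own} companions; one can mix originals with companions of other originals and still obtain the same total, and then the intended ``originals vs.\ companions'' split is no longer the only balanced one. Your ``high-order balance digit'' remark does not resolve this: a single balance digit can control how many originals versus companions are chosen, but not \emph{which} ones are paired.

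\textbf{What the paper does, and why it is much simpler.} The paper bypasses both steps by observing that the partition is already built into the clique combinatorics: encode vertex $i$ as $(k-1)q^{i}$ and edge $\{i,j\}$ as $q^{i}+q^{j}$ (for a prime $q>k$), and take the Selecting-PARTITION parameter to be $k+\binom{k}{2}$. For a $k$-clique, put the $k$ vertex-numbers in $A_1$ and the $\binom{k}{2}$ edge-numbers in $A_2$; since every clique vertex lies in exactly $k-1$ clique edges, both sides sum to $(k-1)\sum_{j} q^{i_j}$. No global target needs to be fixed in advance, no companions are needed, and the converse is argued via uniqueness of base-$q$ digits. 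Notice this is exactly your ``vertex contributes $k-1$, each incident edge contributes $1$'' idea, used \emph{directly} as the two halves of a PARTITION instance rather than as a Subset-Sum target---the detour is what created your difficulties.
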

	
\begin{proof}
	Given an instance $G=(V,E)$ of $k$-clique, construct an instance of Selecting-PARTITION with parameter $k+k(k-1)/2$ as follows. Assume w.l.o.g. that $V=[n]$, and each vertex in $V$ is labeled by an integer $i\in [n]$. Choose a prime number $q$ with $q>k$. Initially let $S=\emptyset$. For each vertex $i$, add the integer $(k-1)q^i$ into S. For each edge $(i,j)\in E$, add $q^i+q^j$ into S. The construction can be done in $O(|V|+|E|)=O(n^2)$ time.
	
	For ease of reference let $S_v=\{(k-1)q,(k-1)q^2,\cdots, (k-1)q^n$ \}, and let $S_e$ be the set of integers associated with the edges, where each integer is of the form $q^i+q^j$. It can be verified that $S_v\cap S_e=\emptyset$.
	
	If there exists a $k$-clique in $G$ where the vertexes are $i_1,i_2,\cdots, i_k$, then the vertexes and edges in the clique correspond with two sets of integers $A_1=\{(k-1)q^{i_1},\cdots, (k-1)q^{i_k} \}$ and $A_2=\{q^{i_1}+q^{i_2},\cdots, q^{i_{j-1}}+q^{i_j} \}$. It is easy to see that $\sum\limits_{s_i\in A_1}s_i=\sum\limits_{s_i\in A_2}s_i$. The set $A=A_1\cup A_2$ is of size $k+k(k-1)/2$.
	
	If there exists a set $A\subset S$ where $|A|=k+k(k-1)/2$ that is a Yes-instance of PARTITION, there are two cases. The first case is  that there is at least one integer in $A$ that is from $S_v$, say, $(k-1)q^i$ for some $i$. Then there must be $(k-1)$ integers from $S_e$ which correspond to $(k-1)$ edges incident to the vertex $i$. The $(k-1)$ edges then lead to $k-1$ different vertexes, and that is totally $k$ vertexes plus vertex $i$. They must form a $k$-clique, otherwise the corresponding integers will not be a Yes-instance of PARTITION. The second case is that all the $k+k(k-1)/2$ integers are from $S_e$. In this case the $k+k(k-1)/2$ integers corresponds to $k+k(k-1)/2$ edges which form a $(k+1)$-clique, and the $(k+1)$-clique must contain a $k$-clique.
\qed\end{proof}	
	
\begin{lemma}\label{lemma:para-reduction-select-partition-dap-msr}
	There is a parameterized reduction from Selecting-PARTITION to DAP-MSR and DAP-SSR.
\end{lemma}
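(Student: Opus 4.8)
The plan is to reduce Selecting-PARTITION to DAP-MSR in the splitter-graph formation, using the freedom in choosing splitters to simulate the freedom in choosing the subset $A\subset S$ of size $k$, and using the DRP-part of DAP (choosing the permutation together with the fixed communication matrix $C$) to simulate the PARTITION check, exactly in the spirit of the NP-completeness reduction for DRP-MSR in the proof of Theorem~\ref{thrm:drp-msr-npc}. Given an instance $(S=\{s_1,\dots,s_n\},B,k)$ of Selecting-PARTITION, I would first build a ``PARTITION gadget'' on $O(k)$ machines mirroring the construction in Theorem~\ref{thrm:drp-msr-npc}: two special machines whose mutual communication cost is a large $\Delta$, connected by unit-cost edges to two disjoint blocks of $k$ ``bucket'' machines, so that any permutation is forced to route the $k$ selected numbers into the two blocks, and the MSR cost equals $\max\{\sum_{A_1}s_i,\sum_{A_2}s_i\}$ over the induced partition $(A_1,A_2)$ of the selected $k$ numbers; set the threshold to $B/2$. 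The remaining machines get communication costs making them irrelevant (e.g.\ cost $0$ among themselves and large cost to the gadget, or simply arranged so that routing anything through them is never profitable at threshold $B/2$).

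Next I would encode the choice of $A$ into the splitter choice. The idea is to lay out the $n$ input integers of $S$ along the sorted data axis so that the $n-1$ splitters, which partition the data into $n$ intervals, select exactly which $s_i$ land in ``active'' intervals versus ``dead'' intervals. Concretely, I would construct the $n$ local multisets $S_1,\dots,S_n$ of the DAP instance so that choosing the splitters is equivalent to choosing a size-$k$ subset of $\{s_1,\dots,s_n\}$: one fixes a padding scheme of identical filler records so that all but $k$ of the splitters are forced (they must fall between consecutive filler blocks to keep cost below threshold), and the $k$ ``free'' splitters select which $k$ of the $s_i$ get isolated into their own intervals, hence fed to the PARTITION gadget via the matrix $T[i,j]=|S_i\cap(s^*_{j-1},s^*_j]|$. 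The values $\omega(v_{i,j},v_{i',j'})[\cdot]$ along a path then reproduce exactly the $T$-matrix of the forced DRP-MSR instance on the gadget machines, with the $k$ selected values in the entries that the gadget forces into the two blocks, and small/zero values everywhere else. I must also check the bounded-value condition needed for membership: all quantities ($\Delta$, the $s_i$, $B$) are polynomial in the input size, so they are $\le f(k)\,p(N)$, consistent with the W[1] upper bound just proved.

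The correctness argument is then: a size-$k$ subset $A$ of $S$ admitting a perfect partition $(A_1,A_2)$ with $\sum_{A_1}=\sum_{A_2}=B/2$ yields a splitter choice (isolating exactly the elements of $A$) together with a permutation (routing $A_1$ to one block, $A_2$ to the other, fillers and dead intervals to irrelevant machines) of MSR cost exactly $B/2$; conversely, any DAP-MSR solution of cost $\le B/2$ must, because of the $\Delta$-edge, keep the two special machines from exchanging data, which forces the $k$ isolated numbers to split across the two unit-cost blocks with each block summing to at most $B/2$, hence exactly $B/2$ each, giving a perfect partition of the $k$-subset that was isolated. For DAP-SSR the same construction works verbatim since, as in the proof of Theorem~\ref{thrm:drp-ssr-npc}, every gadget machine has either zero send or zero receive cost, so SSR and MSR coincide on these instances. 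The parameter of the produced DAP instance is its number of machines, which is $O(k)$ (the $2+2k$ gadget machines plus a constant number of auxiliaries), so this is a genuine parameterized reduction.

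The main obstacle I expect is the splitter-encoding step: making the correspondence between splitter choices and size-$k$ subsets \emph{exact} and \emph{robust}, i.e.\ ensuring that every splitter placement achieving cost below threshold really does isolate precisely $k$ of the $s_i$ and nothing else, and that the $n-k$ ``forced'' splitters cannot be abused to create a cheaper-but-illegal routing. This requires carefully choosing the filler record counts and the sorted positions of the $s_i$ (and possibly scaling all the $s_i$ up so that a misplaced splitter strictly exceeds $B/2$), and verifying that the $T$-matrix produced by any below-threshold path is, on the gadget machines, \emph{exactly} of the form handled by the Theorem~\ref{thrm:drp-msr-npc} analysis — no stray mass leaking onto the two special machines. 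Once that combinatorial bookkeeping is pinned down, the rest follows the DRP-MSR/DRP-SSR arguments essentially line for line.
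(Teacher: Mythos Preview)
Your high-level strategy is the paper's: the splitter choice selects the $k$-subset, and then the Theorem~\ref{thrm:drp-msr-npc} gadget on $2k+2$ machines (two special machines joined by a $\Delta$-edge, two disjoint blocks of $k$ unit-cost buckets) checks PARTITION at threshold $B/2$; SSR is handled via the same send/receive-disjointness remark you cite. The difference is in how the subset selection is encoded. You plan to build genuine data multisets $S_1,\dots,S_{2k+2}$ with filler blocks that force all but $k$ of the splitters, and you rightly flag this bookkeeping as the main obstacle. The paper sidesteps that obstacle entirely by reducing to the \emph{splitter-graph formulation} of DAP-MSR and directly declaring the edge-weight vectors to be $\omega(v_{i,j},v_{i',j'})=(s_{j'},s_{j'},0,\dots,0)$ for $i'\in[1,k]$ and the zero vector for $i'\in[k+1,2k+2]$. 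Then any path $(-\infty,v_{1,i_1},\dots,v_{2k+1,i_{2k+1}},\infty)$ automatically yields $T[1,j]=T[2,j]=s_{i_j}$ for $j\in[k]$ and zeros elsewhere, i.e.\ exactly the DRP-MSR instance of Theorem~\ref{thrm:drp-msr-npc} on the selected indices $i_1<\cdots<i_k$; the last $k{+}1$ splitters carry zero weight and are irrelevant. This makes the reduction a few lines long, with no fillers and no forcing argument. The trade-off is that the paper's $\omega$ depends only on the head index $j'$ and not on $j$, so it is not literally of the form $|S_l\cap(s_j,s_{j'}]|$ for any data distribution; the paper is tacitly treating the splitter-graph formulation with arbitrary $\omega$ as the target problem, whereas your route, if the filler analysis is carried through, would land on Definition~\ref{def:dap-cont} itself.
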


\begin{proof}
	We only describe the reduction to DAP-MSR, and the proof is similar for DAP-SSR.	
	Given an instance of Selecting-PARTITION, construct a splitter-graph formation of DAP-MSR as follows. First construct a splitter-graph $G_s(V,E,S,2k+2)$. For each edge $(v_{i,j},v_{i',j'})$ in the splitter-graph, let 
	\begin{equation*}
		\omega(v_{i,j},v_{i',j'})=\left\{\begin{aligned}
			(s_{j'},s_{j'},0,\cdots,0),&\; if\; i'\in[1,k]\\
			(0,0,0,\cdots,0),&\; if\; i'\in [k+1,2k+2]
		\end{aligned}\right.
	\end{equation*}
	where the length of the vectors are $2k+2$. 
	
	By such  construction,  each path $(-\infty, v_{1,i_1},v_{2,i_2}\cdots v_{2k+1,i_{2k+1}},\infty)$ corresponds to the following matrix $T^{(2k+2)\times(2k+2)}$
	\begin{equation*}
		T[l,j]=\left\{\begin{aligned}
			s_{i_j},&\; if\; l=1\; or\; 2,\; j\in [1,k] \\
			0,&\; otherwise\\
		\end{aligned}
		\right.
	\end{equation*}
	
	Let the communication cost matrix $C^{(2k+2)\times (2k+2)}$ be as follows.
	\begin{equation*}
		C[i,j]=\left\{\begin{aligned}
			1,&\; if\; i=1,j\in [3,k+2] \\
			1,&\; if\; i=2,j\in [k+3,2k+2]\\
			\Delta, &\; if (i=1,j=2)\; or\; (i=2,j=1)\\
			0,&\; otherwise\\
		\end{aligned}
		\right.
	\end{equation*}
	Thus, any selected $k$ integers $s_1,s_2,\cdots, s_k$ from $S$ correspond to two matrices $T^{(2k+2)\times(2k+2)}$ and $C^{(2k+2)\times (2k+2)}$. We can see that the structure of the two matrices $T^{(2k+2)\times(2k+2)}$ and $C^{(2k+2)\times (2k+2)}$ are the same with that used in the proof of NP-completeness of DRP-MSR. 
	By the reduction from PARTITION to DRP-MSR, the $k$ integers $s_1,s_2,\cdots, s_k$ forms a Yes-instance to PARTITION if and only if the corresponding matrices $T^{(2k+2)\times(2k+2)}$ and $C^{(2k+2)\times (2k+2)}$ form an instance of DRP-MSR whose optimum cost value is $B/2$. We conclude the proof by observing the above reduction takes $O(kN^2 )$ time, and the parameter of the constructed DAP-MSR instance is $n=2k+2$, which satisfies the definition of parameterized reduction.	
\qed\end{proof}

\section{Related Works}\label{sec:rwork}
\subsection{Parallel computation models}
This paper is based on the topology-aware MPC model \cite{Blanas2020,Hu2021}, which is almost new. The history of modeling parallel computation is long though, and there exist a lot of models such as PRAM, LogP, CONGEST and so on. Based on our observation, there are three important aspects for parallel computation, which are local computation, communication and data exchange between memory hierarchies. A good model for parallel computation should consider at least one aspect in detail, and neglect the other aspects if necessary. We categorize the existing models into the following three classes according to which aspect of parallel computation that the model emphasize.  An important note is that there is no research work that consider all the three aspects in a single model by now.
\subsubsection{Models that emphasize communication}
Most models consider communication as the most important aspect of parallel computation. Some of these models even totally neglect the local computation, assuming any local computation can be done in a single unit of time. There indeed exist many important contents that is worth considering for communication, such as network topology, synchronization, latency, link capacity, routing, and so on.

A lot of early models are dedicated to model the network topology, such as hypercube network \cite{malluhi1994hierarchical} and switch network \cite{kolp1994performance}. See \cite{duncan1990survey} for a survey. PRAM \cite{Karp1989} uses a shared memory to exchange message between processors, and the strategy of exclusive/non-exclusive read/write leads to different variants of PRAM, such CREW (concurrent read, exclusive write), EREW, CRCW and so on. The processors are assumed to have arbitrarily strong computational power. PRAM is the most acknowledged model for theoretical researches in parallel computation. See \cite{harris1994survey} for a survey on PRAM simulation methods. BSP \cite{Valiant1990} model emphasizes synchronization, where the computation is divided into synchronized supersteps. LogP \cite{Culler1996} emphasizes the cost of message exchanging, where $L$ stands for latency, $o$ stands for overhead, and $g$ stands for gap. CONGEST \cite{Peleg2000} emphasizes the network topology and link capacity, which is mostly used to study graph problems \cite{DBLP:conf/wdag/AhmadiKO18,DBLP:conf/wdag/Censor-HillelKP17,DBLP:conf/opodis/GonenO17}. The communication network in CONGEST has the same topology with the graph problem which it considers, and the length of the messages between any two processors in any round is restricted to $O(\log{n})$ where $n$ is the number of processors. The Congested Clique model is an variant of CONGEST, where the communication network is a complete graph. Many graph problems were studied on Congested Clique model such as MST \cite{DBLP:conf/podc/GhaffariP16}, maximum matching \cite{DBLP:conf/wdag/Gall16}, shortest path \cite{DBLP:conf/opodis/HolzerP15}, etc. There are many theoretical research works on CONGEST and Congested Clique model, and please refer to the references in the papers cited here for more related works. The MPC model \cite{Karloff2010} is abstracted from MapReduce \cite{Dean2008}, which can also be regarded as a simplified version of BSP. Finally, the topology aware MPC model \cite{Blanas2020,Hu2021} is based on the MPC model and involves the consideration for network topology, which directly inspires this work.

\subsubsection{Models that emphasize local computation}
As we have mentioned, many models neglect the local computation and focus on communication, but it does not reduce the importance of local computation in parallel computation. If the communication cost is the only consideration, it may cause the problem of workload imbalance. Actually balancing the workload is an important aspect in the research of parallel computation which is considered in a lot of research papers from 1990s until today \cite{DBLP:journals/tkde/CheungLX02,DBLP:conf/icde/LeeC93,DBLP:journals/ijat/ShimadaS21}.

To the extent of our knowledge, the only model that emphasizes the local computation is the MapReduce model in the original form \cite{Dean2008}. The MapReduce model defines the local computation as a Map function and a Reduce function. The Map function transfers the data into key-value pairs. The Reduce function conduct the pre-defined computation on the set of key-value pairs with the same key. The communication in MapReduce model is implicit and automatically excecuted by the underlying framework, which gathers all the key-value pairs with the same key to the same machine. Therefore, the main focus of MapReduce model is how to design the Map and Reduce functions, which are both local computation functions. 

After the MapReduce model is abstracted into the MPC model \cite{Karloff2010,Lattanzi2011}, the consideration of local load is still a main focus in many research works \cite{Tao2013}, especially in the study of the join operation in database theory \cite{Beame2013,Beame2014,Koutris2016,Koutris2011,Koutris2018}.
\subsubsection{Models that emphasize memory hierarchy}
The memory hierarchy of modern processors includes CPU cache, main memory, external memory and remote storage center. While the idea of memory hierarchy is first established for single processor computation \cite{aggarwal1987model,alpern1994uniform}, it is of equal importance in parallel computation \cite{DBLP:journals/paapp/LiMR96,DBLP:journals/ijhpcn/QiaoCY04,zhang2003dram}. The initial motivation of parallel computation is to deal with the problems with size exceeding the power of a single machine. As the amount of data to be processed grows larger and larger, there must be the case that the data size is still too large for each processor even if hundreds of processors are used. Thus, it is necessary to consider the external memory and input/output complexity in parallel computation. However, as far as we know, most recent researches on MPC and Congest model assume that the data can be split fine enough such that the data for each processor can fit in the local memory, which is not the case in realistic environments dealing with massive data. We think there is a lot of research opportunities in this direction.

\subsection{Assignment problem}
The DRP and DAP are closely related to the assignment problem. As we have shown, DRP-TOTAL is equivalent to Linear Assignment Problem (LAP),  DRP-BTNK is equivalent to Linear Bottleneck Assignment Problem (LBAP), and the solution to DAP-BTNK highly relies on the solution to LBAP. We refer the readers for \cite{burkard1999linear} to a good survery of linear assignment problem series.

The Quadratic Assignment Problem (QAP) \cite{DBLP:journals/eor/LoiolaANHQ07} is harder than LAP which is NP-complete. The permutation formation of QAP is $\min\limits_{\pi\in \Pi(n)}\sum\limits_{i=1}^n\sum\limits_{j=1}^n{F[i,j]D[\pi_i,\pi_j]}$, where $F^{n\times n}$ and $D^{n\times n}$ are two input matrices. QAP can be regarded as assigning  virtual machines to physical machines, where the transmission matrix $F^{n\times n}$ is defined between virtual machines, and the communication cost matrix $D^{n\times n}$  is defined between physical machines. Comparing with the definition of DRP and DAP, where the transmission matrix is defined between physical machine and virtual machines, it can be noticed that DRP and DAP has a structure more complex than LAP but simpler than QAP. Besides the equivalence of DRP-TOTAL and LAP as well as DRP-BTNK and LBAP, another interesting fact is that the 3-PARTITION problem is used in the reduction to prove the NP-completeness DRP-SSR, while in \cite{Queyranne1986Performance} the 3-PARTITION problem is used to prove that QAP is unapproximable, i.e.,  there is no polynomial time $r$-approximate algorithm for QAP where $r<\infty$, unless P=NP.

It is worth mentioning that the ROBOT problem defined in \cite{as1998bipartite} has a similar structure with DRP. The input of ROBOT problem includes two functions $f$ and $d$, where $f$ defines the relation between the physical locations and items, and $d$ defines the distance between physical locations. The $f$ and $d$ functions has a similar structure with the $T^{n\times n}$ and $C^{n\times n}$ matrix in DRP problem. The goal of ROBOT problem is to find a TSP with minimal length, where the TSP part makes it NP-complete.

\subsection{Number partition problem}
The NP-completeness proof for DRP-MSR is based on the PARTITION problem, which is often referred as the number partition problem (NPP) in existing literature. Recall that NPP is given a set $S$ of integers  $S=\{s_1,s_2,\cdots, s_n \}$, and decide whether there exists a partition $(S_1,S_2)$ of $S$ s.t.  and $\sum\limits_{s_i\in S_1}s_i=\sum\limits_{s_i\in S_2}s_i$. NPP is one of Gary and Johnson's six basic NP-complete problems \cite{hartmanis1982computers}, and the hardness of this problem is well-known. The approximation for this problem often considers the \textit{discrepancy}, which is $|\sum\limits_{s_i\in S_1}s_i-\sum\limits_{s_i\in S_2}s_i|$. A famous polynomial time heuristic proposed by Karmarker and Karp \cite{karmarkar1982differencing} is called the differencing method, and can lead to $O(n^{-\alpha\log{n}})$ discrepancy for some positive constant $\alpha$ \cite{yakir1996differencing}.  Another line of research consider to minimize the discrepancy on randomized data \cite{borgs2001phase,lueker1998exponentially,mertens1998phase}. See \cite{mertens2006number} for a good survey on NPP. We note there are no existing work trying to design approximate algorithm for the value $\max\left\{\sum\limits_{s_i\in S_1}s_i,\sum\limits_{s_i\in S_2}s_i \right\}$ as far as we know.

\subsection{Data redistribution}
The DRP and DAP problem series emphasizes the importance of data redistribution which have been considered by a lot of former research works. The goal of data redistribution is to minimize the communication cost while satisfying the specific requirement on the data distribution. The works in  \cite{DBLP:conf/hipc/ChengL16,polychroniou2014track,rodiger2014locality} consider the data redistribution for join operation in database, in which the two papers \cite{polychroniou2014track,rodiger2014locality} partially inspire this work. Another work \cite{kurkal2009data} considers the data redistribution in sensor networks. Knoop et al. \cite{knoop2002distribution} consider the Distribution Assignment Placement in a engineering point of view, where the abbreviation coincides with DAP in this paper.

\section{Future works}\label{sec:fwork}
Recall that the problem series of DRP and DAP are introduced using parallel sorting and join as the representing example. They reflect the communication pattern of problems that can be solved in one synchronous round, i.e., after one round of communication to redistribute the data, it suffices for all the machines to conduct local computation to finish the computation task. However, there are many problems that need multiple rounds to solve. For example, joining multiple relations can be solved using one round \cite{Beame2014} or multiple rounds \cite{Afrati2017}. Computing the graph coloring \cite{chang2019complexity}, maximum matching \cite{Ghaffari2013}, shortest path \cite{dory2021constant}, etc., must use multiple rounds. The problem will be complicated to minimize the communication cost on WMPC model with multiple rounds. We have the two following observation for future works.

First, it may not be the optimal solution to solve DRP or DAP problem for each round, since the communication of each round is correlated. 
It also involves to decide a better initial distribution so that the communication cost of the subsequent parallel computation can be reduced. We call this problem the Data Pre-distribution Problem. 
Second, in this paper it is assumed that each pair of computation machine in WMPC model can communicate in a point-to-point manner, i.e., $C[i,j]<\infty$ for all elements in the communication cost matrix $C$. Actually this assumption is set to be compatible with the one round algorithm,  i.e., the machines must be able to reach each other in one round. If multiple rounds are allowed, the assumption of point-to-point communication can be removed, i.e., there can be some element $C[i,j]=\infty$. There will be many interesting but complicated problems such as routing and congestion under the WMPC model, which are left as future work.

\section{Conclusion}\label{sec:conc}
In this paper we proposed the WMPC (Weighted Massively Parallel Computation) model based on the existing works of topology-aware Massively Parallel Computation model \cite{Blanas2020,Hu2021}. The WMPC model considers the underlying computation network as a complete weighted graph, which is a complement to the work in \cite{Hu2021} where the network topology are restricted to trees. Based on the WMPC model the DRP and DAP problem series are  defined, each representing a set of problems with the same pattern of communication. We also defined four kinds of objective functions for communication cost which are TOTAL, BTNK, MSR and SSR, and obtained 8 problems combining the four objective functions with two communication pattern problems. We studied the hardness for the 8 problems, and provided substantial theoretical results. In conclusion, this paper studied the communication minimization problem on WMPC model with a scope both deep and wide, but we must point out that the proposed results only investigated a small portion of the research area on the WMPC or topology-aware MPC model. There are a lot of meaningful problems to be studied following what was studied in this paper.

\bibliographystyle{plain}
\bibliography{library}
 
\appendix

\end{document}